\newtheorem{theorem}{Theorem}
\newtheorem{observation}[theorem]{Observation}
\newtheorem{claim}[theorem]{Claim}
\newtheorem{lemma}[theorem]{Lemma}
\newtheorem{corollary}[theorem]{Corollary}
\newtheorem{proposition}[theorem]{Proposition}
\newtheorem{definition}[theorem]{Definition}
\newcommand{\rev}[1]{\mathrm{Rev} ( #1 )}
\newcommand{\brev}[1]{\mathrm{BRev} ( #1 )}
\newcommand{\arev}[1]{\mathrm{ARev} (#1)}
\newcommand{\realD}{\mathcal{D}}
\newcommand{\gap}{\mathrm{gap}}
\newcommand{\sgap}{\mathrm{sgap}}
\newcommand{\SupGap}[1]{\mathrm{SupGap} ( #1 )}
\newcommand{\MenuGap}[1]{\mathrm{MenuGap} ( #1 )}
\newcommand{\ScaGap}[1]{\mathrm{AlignGap} ( #1 )}
\newcommand{\LagRel}[1]{\mathrm{LagRel} (#1)}
\newcommand{\LagRelOne}[1]{\mathrm{LagRel}_1 (#1)}
\newcommand{\LagRelTwo}[1]{\mathrm{LagRel}_2 (#1)}
\newcommand{\norm}[1]{\left\lVert#1\right\rVert}
\begin{document}

\title{On Infinite Separations Between Simple and Optimal Mechanisms}

\author{
C. Alexandros Psomas
\thanks{Department of Computer Science, Purdue University, apsomas@cs.purdue.edu.} \and 
Ariel Schvartzman
\thanks{Center for Discrete Mathematics and Theoretical Computer Science (DIMACS), Rutgers University, as3569@dimacs.rutgers.edu. Supported NSF CCF-1445755.} \and
S. Matthew Weinberg
\thanks{Department of Computer Science, Princeton University, smweinberg@princeton.edu. Supported by NSF CCF-1717899}
}

\maketitle

\begin{abstract}
We consider a revenue-maximizing seller with $k$ heterogeneous items for sale to a single additive buyer, whose values are drawn from a known, possibly correlated prior $\realD$. It is known that there exist priors $\realD$ such that simple mechanisms --- those with bounded menu complexity --- extract an arbitrarily small fraction of the optimal revenue~(\citet{BriestCKW15,hart2019selling}). This paper considers the opposite direction: given a correlated distribution $\realD$ witnessing an infinite separation between simple and optimal mechanisms, what can be said about $\realD$?

\citet{hart2019selling} provides a framework for constructing such $\realD$: it takes as input a sequence of $k$-dimensional vectors satisfying some geometric property, and produces a $\realD$ witnessing an infinite gap. Our first main result establishes that this framework is without loss: \emph{every} $\realD$ witnessing an infinite separation could have resulted from this framework. An earlier version of their work provided a more streamlined framework~\cite{HNv1}. Our second main result establishes that this restrictive framework is \emph{not} tight. That is, we provide an instance $\realD$ witnessing an infinite gap, but which provably could not have resulted from the restrictive framework. 

As a corollary, we discover a new kind of mechanism which can witness these infinite separations on instances where the previous ``aligned'' mechanisms do not.
\end{abstract}

\section{Introduction}
\label{sec:intro}


Consider a revenue-maximizing seller with $k$ items for sale to a single additive buyer, whose values for the $k$ items are drawn from a known distribution $\realD$. When $k=1$, Myerson's seminal work provides a closed-form solution to the revenue-optimal mechanism, and it has a particularly simple form: simply post a price $p:=\arg\max_{p}\{ p \cdot \Pr_{v \leftarrow \realD}[v \geq p]\}$, and let the buyer purchase the item if they please~(\citet{Myerson81}). For $k > 1$, however, this \emph{multi-dimensional mechanism design} problem remains an active research agenda forty years later.

While simple, constant-factor approximations are known in quite general settings when $\realD$ is a product distribution~(\citet{ChawlaHK07, ChawlaHMS10, ChawlaMS15, KleinbergW12, LiY13, BabaioffILW14, Yao15, RubinsteinW15, CaiDW16, ChawlaM16, CaiZ17}), there may be an \emph{infinite} gap between the revenue-optimal auction and any simple counterpart when values are correlated~(\citet{BriestCKW15,hart2019selling}). More specifically: one simple way to sell $k$ items is to treat the grand bundle of all items as if it were a single item, and sell it using Myerson's optimal auction (which sets price $\arg\max_p\{ p \cdot \Pr_{\vec{v} \leftarrow \realD}[\sum_i v_i \geq p] \}$). Letting $\brev{\realD}$ denote the revenue achieved by this simple scheme,~\citet{hart2019selling} further show a connection between $\brev{\realD}$ and \emph{any} simple mechanism through the lens of menu complexity: any mechanism with menu complexity at most $C$ generates expected revenue at most $C \cdot \brev{\realD}$. 

These works establishes a strong separation between simple and optimal auctions: \emph{even when $k=2$}, there exist distributions $\realD$ such that $\rev{\realD} = \infty$ (the optimal revenue) while $\brev{\realD}=1$. The fact that $\rev{\realD} = \infty$ does not on its own suggest that $\realD$ must be ``weird'' (the one-dimensional distribution with CDF $1-1/\sqrt{x}$ has this property). The ``weird'' property is that $\rev{\realD}/\brev{\realD} = \infty$, which can never occur for a one-dimensional distribution. 

\citet{BriestCKW15} and~\citet{hart2019selling} establish sufficient conditions for a distribution $\realD$ to satisfy $\rev{\realD}/\brev{\realD} = \infty$. Simply put, the goal of this paper is to study \emph{necessary} conditions for a distribution to satisfy $\rev{\realD}/\brev{\realD}= \infty$. We provide two main results. The first establishes that the sufficient condition presented in~\cite{hart2019selling} \emph{is in fact necessary} for $\rev{\realD}/\brev{\realD} = \infty$ (Theorem~\ref{thm:main}). The second establishes that the sufficient condition used in an earlier version of that work~\cite{HNv1} and~\cite{BriestCKW15} is \emph{not necessary} (Theorem~\ref{thm:main2}). In establishing Theorem~\ref{thm:main2}, we also construct a distribution $\realD$ such that $\rev{\realD}/\brev{\realD} = \infty$ yet the mechanism witnessing this provably falls outside the scope of any previous constructions (Corollary~\ref{cor:main}). 

Proper context and formal statements of our results require precise definitions, which we provide in Section~\ref{sec:prelims} immediately below. Section~\ref{sec:results} provides formal statements of our results, along with context alongside related work. Subsequent sections provide proofs.
\section{Preliminaries}\label{sec:prelims}

We consider an auction design setting with a single buyer, single seller, and $k$ heterogeneous items. Note that our positive results hold for arbitrary $k$, while our constructions use only $k=2$ (and $k=1$ is not possible). We use $\realD$ to denote a distribution over $\mathbb{R}_{\geq 0}^k$, the (possibly correlated) distribution over the buyer's values for the $k$ items. The buyer is additive, meaning that their value for a set of items $S$ is equal to $\sum_{i \in S} v_i$. We use $\rev{\realD}$ to denote the optimal expected revenue achievable by any incentive-compatible mechanism (formally, the supremum of expected revenues, or $\infty$ if the supremum is undefined), and let $\brev{\realD}$ denote the revenue achieved by selling the grand bundle as a single item using Myerson's auction.\footnote{We briefly remind the reader that $\brev{\realD}$ serves as a proxy for the achievable revenue by any simple mechanism, especially when focusing on the gap between infinite and finite. For example, the revenue achieved by selling the items separately is at most $k\brev{\realD}$, the revenue achieved by any deterministic mechanism is at most $2^k\brev{\realD}$, and, more generally, the revenue achieved by any mechanism which offers at most $m$ distinct options is at most $m\brev{\realD}$~\cite{hart2019selling}.} Finally, a \emph{mechanism} $M$ is a set $\{(\vec{q}_i, p_i)\}_i$, where each $\vec{q}_i \in [0,1]^k$ denotes a vector of probabilities, and $p_i \in \mathbb{R}$ denotes a price. When the buyer's valuation is $\vec{v}$, they pay the auctioneer $p_{i(\vec{v})}$, where $i(\vec{v}):=\arg\max_i\{\vec{v} \cdot \vec{q}_i - p_i\}$.\footnote{How ties are broken is irrelevant to our results --- all results hold for arbitrary tie-breaking. Also, all mechanisms include an all-zero pair $ \vec{q}_0 = (0, \dots, 0)$, $p_0 = 0$ to ensure individual rationality.} $\rev{\realD,M}$ denotes the expected revenue achieved by a particular mechanism $M$ on distribution $\realD$. We will also use the shorthand $\vec{q}^M(\vec{v})$ to denote the allocation vector purchased by a vector $\vec{v}$, and $p^M(\vec{v})$ to denote the price paid (we may drop the superscript of $M$ if the mechanism is clear from context).

\subsection*{Brief Overview of \citet{hart2019selling}}

Below, we formally define two geometric properties of sequences of points, which are the focus of this paper. Many of the ideas below appear in both~\citet{BriestCKW15} and~\citet{hart2019selling}, but we will use the formal definitions from~\citet{hart2019selling} (and an earlier published version~\citet{HNv1}). Below, morally the vectors $\vec{x}_i$ correspond to possible (scaled) valuation vectors, and the vectors $\vec{q}_i$ correspond to possible vectors of allocation probabilities. 

\begin{definition}
\label{def:sergiu}
Let $X=(\vec{x}_i)_{i=1}^N$ be an ordered sequence of $N$ points ($N$ may be finite, or equal to $+\infty$), with each $\vec{x}_i \in \mathbb{R}_{\geq 0}^k$. Let $Q=(\vec{q}_i)_{i=0}^N$ be another ordered sequence of $N$ points, with each $\vec{q}_i \in [0,1]^k$, and starting with $\vec{q}_0 = (0,...,0)$. Define the following: 
\begin{align*}
\gap_i^{X,Q} := \min_{0 \leq j < i} (\vec{q}_i - \vec{q}_j) \cdot \vec{x}_i && \MenuGap{X,Q} := \sum_{i=1}^N \frac{\gap_i^{X,Q}}{||\vec{x}_i||_1}.
\end{align*}

We will also slightly abuse notation and define $\MenuGap{X}:=\sup_Q \{\MenuGap{X,Q}\}$.\footnote{Observe that $\gap_i^{X,Q}$ and $\MenuGap{X,Q}$ might be negative. Any claims made throughout this paper regarding $\MenuGap{X,Q}$ are vacuously true when $\MenuGap{X,Q} < 0$ (e.g. Theorem~\ref{thm:newHN}). We allow $\gap_i^{X,Q}$ to be negative to match the definition of~\cite{hart2019selling} verbatim (although our work will also show that this peculiarity of their definition is not significant).}
\end{definition}

Intuitively, $\MenuGap{X}$ is some (complicated) measure of how distinct the angles of points in $X$ are. To get intuition for this, one might try to write a short proof that when $k=1$, $\MenuGap{X} = 1$ for all $X$ (or that $\MenuGap{X}=1$ whenever all $\vec{x}_i \in X$ are parallel). We provide such a proof in Appendix~\ref{app:single}.

Still, $\MenuGap{X,Q}$ is just some geometric measure with no obvious intuition for why this quantity should be of interest to auction designers. However, one key result of~\cite{hart2019selling} shows that this quantity has connections to simplicity vs.~optimality gaps. Specifically, they show:

\begin{theorem}[\cite{hart2019selling}, Proposition~7.1] 
\label{thm:newHN}
For every pair of sequences $X=(\vec{x}_i)_{i=1}^N, Q=(\vec{q}_i)_{i=0}^N$ starting with $\vec{q}_0 =  (0,...,0)$, and all $\varepsilon > 0$, there exists a distribution $\realD$ and mechanism $M$ such that:

\[ \frac{\rev{\realD,M}}{\brev{\realD}} \geq (1-\varepsilon)\cdot  \MenuGap{X,Q}.\]

Moreover, for all $i \in [N]$, the support of $\realD$ contains a single point of the form $c_i \vec{x}_i$, for some $c_i \in \mathbb{R}_+$ (and no other points). Additionally, $\vec{q}^M(c_i \vec{x}_i) = \vec{q}_i$.
\end{theorem}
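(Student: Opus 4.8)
The plan is to build the distribution $\realD$ and mechanism $M$ essentially by "placing a point on each ray $\vec{x}_i$" with carefully chosen scaling constants $c_i$ and carefully chosen probability masses, so that (a) the bundle revenue $\brev{\realD}$ stays controlled (normalized to, say, $1$), and (b) the mechanism $M$ that offers menu option $(\vec{q}_i, p_i)$ — with $p_i$ chosen so that $c_i\vec{x}_i$ is indifferent between $\vec{q}_i$ and its best earlier option — collects revenue at least $(1-\eps)\MenuGap{X,Q}$. The quantity $\gap_i^{X,Q} = \min_{0\le j<i}(\vec{q}_i-\vec{q}_j)\cdot\vec{x}_i$ is exactly the "utility slack" available at $\vec{x}_i$ relative to all earlier options, so scaling $\vec{x}_i$ by $c_i$ makes the extra price extractable at menu entry $i$ equal to $c_i\gap_i^{X,Q}$; dividing by $\|\vec{x}_i\|_1$ will appear because the Myerson bundle price is a threshold on $\sum_j v_j = c_i\|\vec{x}_i\|_1$.

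The key steps, in order. \emph{Step 1: reduce to finite $N$.} Since $\MenuGap{X,Q}=\sum_{i=1}^N \gap_i^{X,Q}/\|\vec{x}_i\|_1$ is a sum of (eventually, for the relevant terms) nonnegative contributions, for any $\eps$ there is a finite prefix achieving $(1-\eps/2)\MenuGap{X,Q}$ (handle the $\infty$ case by taking prefixes with partial sums $\to\infty$); so fix a large finite $N$. \emph{Step 2: choose the scalings.} Order the rays by the index $i$; pick $c_i$ growing fast enough (geometrically, with ratio depending on $\eps$) that the "bundle value" $t_i := c_i\|\vec{x}_i\|_1$ is strictly increasing and well-separated, $t_1 < t_2 < \cdots < t_N$. \emph{Step 3: assign probabilities.} Put mass $\pi_i$ on the atom $c_i\vec{x}_i$ so that the bundle-pricing revenue from threshold $t_i$ is the same for every $i$; concretely this forces $\pi_i \propto 1/t_i$ up to the tail sums, i.e. $\big(\sum_{j\ge i}\pi_j\big)\cdot t_i$ is constant in $i$. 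Normalizing this common value to $1$ gives $\brev{\realD}\le 1$ (any bundle price $p$ lies between two consecutive $t_i$'s and does no better), and in fact $=1$ up to the usual supremum technicality. \emph{Step 4: define $M$ and compute its revenue.} Set $p_0=0$, and inductively $p_i := \max_{0\le j<i}\big(p_j + (\vec{q}_i-\vec{q}_j)\cdot c_i\vec{x}_i\big)$ — wait, the right normalization is to make $c_i\vec{x}_i$ weakly prefer option $i$ to all earlier ones, i.e. $p_i := \min_{0\le j<i}\big(p_j + (\vec{q}_i - \vec{q}_j)\cdot c_i\vec{x}_i\big)$. One checks by induction that with this choice, valuation $c_i\vec{x}_i$ is indifferent between option $i$ and its best earlier option, hence (with the right tie-break, or an arbitrarily small perturbation, which is why the $(1-\eps)$ factor is there) buys option $i$ and pays $p_i$, and that $p_i \ge c_i\cdot\gap_i^{X,Q}$ holds — indeed $p_i = \sum$ of telescoping slacks is at least $c_i\gap_i^{X,Q}$; then $\rev{\realD,M} = \sum_i \pi_i p_i \ge \sum_i \pi_i c_i \gap_i^{X,Q}$, and since $\pi_i \ge$ (normalized) $\asymp 1/t_i = 1/(c_i\|\vec{x}_i\|_1)$ up to a $(1-\eps)$ factor coming from the tail-mass normalization, this is $\ge (1-\eps)\sum_i \gap_i^{X,Q}/\|\vec{x}_i\|_1$. \emph{Step 5:} verify $\vec{q}^M(c_i\vec{x}_i)=\vec{q}_i$ and individual rationality (the all-zero option guarantees $p_i\le $ value, using $\vec{q}_i\in[0,1]^k$ and that option $0$ is available so $p_i \le c_i\vec{x}_i\cdot\vec{q}_i$).

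\emph{Main obstacle.} The delicate point is simultaneously controlling $\brev{\realD}$ from above and $\rev{\realD,M}$ from below \emph{with the same probability weights}: the atom $c_i\vec{x}_i$ must have enough mass to contribute $\approx \gap_i^{X,Q}/\|\vec{x}_i\|_1$ to $M$'s revenue, yet the \emph{tail} masses $\sum_{j\ge i}\pi_j$ must be small enough that Myerson's bundle price cannot exploit them — these pull in opposite directions and force the geometric separation of the $t_i$'s and the $(1-\eps)$ slack. A secondary subtlety is the tie-breaking / indifference issue in Step 4 (valuation $c_i\vec{x}_i$ is exactly indifferent between option $i$ and an earlier one), which is precisely why the theorem is stated with a $(1-\eps)$ factor rather than an equality; it is resolved either by an infinitesimal perturbation of prices or by appealing to the paper's stated convention that results hold for arbitrary tie-breaking together with a limiting argument. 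I also need to double-check the edge behavior when some $\gap_i^{X,Q}$ is negative — such terms only hurt $\MenuGap{X,Q}$, the claim is vacuous there per the paper's footnote, so WLOG one may restrict to $i$ with $\gap_i^{X,Q}\ge 0$ (dropping the other atoms entirely), which is cleanest.
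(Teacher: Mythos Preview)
The paper does not prove this theorem itself---it is quoted from \cite{hart2019selling} (their Proposition~7.1)---and the only place the construction is touched is the brief review at the start of Appendix~\ref{app:main}, which records the Hart--Nisan distribution: put mass $1/B^{2^i}$ on the point $B^{2^i}\vec{x}_i/\|\vec{x}_i\|_1$ for a large constant $B$. Your plan is exactly this construction. Your Steps~2--3 are the choice of bundle values $t_i=c_i\|\vec{x}_i\|_1$ and masses $\pi_i\approx 1/t_i$ (the paper's review takes $t_i=B^{2^i}$, so $\pi_i t_i=1$ exactly and the tail $\sum_{j\ge i}\pi_j=(1+o(1))/t_i$), and your Step~4 pricing $p_i=\min_{j<i}\bigl(p_j+c_i\vec{x}_i\cdot(\vec q_i-\vec q_j)\bigr)$ together with the bound $p_i\ge c_i\gap_i^{X,Q}$ is the standard accounting. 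So the approach is the same.

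One point deserves tightening. Your pricing rule only guarantees the ``upward'' IC, that $c_i\vec{x}_i$ weakly prefers option $i$ to every earlier option $j<i$. The ``downward'' direction---that $c_i\vec{x}_i$ does not strictly prefer some later option $j>i$---is relegated to ``verify $\vec{q}^M(c_i\vec{x}_i)=\vec{q}_i$'' in Step~5 with no argument, and it does \emph{not} follow from a geometric ratio that depends only on $\eps$. The clean fix is this: from $j=0$ in your min you have $p_i\le c_i\vec{x}_i\cdot\vec q_i\le t_i$, hence $c_i\vec{x}_i$'s utility from option $i$ is nonnegative, while its utility from any option $j$ is at most $t_i-p_j$; so it suffices to arrange $p_j\ge t_{j-1}$ for every $j$. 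Combined with $p_j\ge c_j\gap_j^{X,Q}$ this forces $t_j/t_{j-1}\ge \|\vec{x}_j\|_1/\gap_j^{X,Q}$, so the growth ratio must dominate $\max_{j\le N}\|\vec{x}_j\|_1/\gap_j^{X,Q}$. After your Step~1 truncation to finite $N$ and your dropping of nonpositive-gap indices this maximum is finite and the plan goes through, but the ratio depends on $X,Q$, not only on $\eps$. (This is also why the reviewed construction uses the doubly-exponential $t_i=B^{2^i}$: the ratio $t_j/t_{j-1}=B^{2^{j-1}}$ grows without bound, which lets a single $B$ handle all indices uniformly without first truncating.)
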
 

The ``Moreover,...'' portion of Theorem~\ref{thm:newHN} gives some insight to their construction. Further insight can be deduced by observing that the constraint ``$\gap_i^{X,Q} \leq \vec{x}_i \cdot (\vec{q}_i - \vec{q}_j)$'' looks similar (but far from identical) to an incentive compatibility constraint involving a valuation vector $\vec{x}_i$ and two allocation vectors $\vec{q}_i,\vec{q}_j$. We refer the reader to~\cite{hart2019selling} for further details and intuition for this connection. Theorem~\ref{thm:newHN} gives a framework for proving simplicity vs.~optimality gaps, but of course leaves open the question of actually finding a pair of sequences $X, Q$. They approach this through the following observation:

\begin{definition}
\label{def:noam}
Given a sequence of points $X = (\vec{x}_i)^N_{i=0} \in [0,1]^k$ starting with $\vec{x}_0 = (0,...,0)$, define $\SupGap{X}$ (read ``support gap of $X$") as follows: 
\begin{align*}
\SupGap{X} := \MenuGap{X,X}.
\end{align*}
\end{definition}

\begin{observation}\label{obs:trivial} For all $X$, $\MenuGap{X} \geq \SupGap{X}$.
\end{observation}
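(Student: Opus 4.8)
The plan is to unwind the definitions and observe that $\SupGap{X}$ is just $\MenuGap{X,Q}$ for one particular choice of $Q$, namely $Q=X$, whereas $\MenuGap{X}$ is defined as the supremum of $\MenuGap{X,Q}$ over \emph{all} admissible $Q$. So the whole content of the observation is that the choice $Q=X$ is admissible in the sup.

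First I would check that $Q := X = (\vec{x}_i)_{i=0}^N$ meets the requirements imposed on the second sequence in Definition~\ref{def:sergiu}: each entry must lie in $[0,1]^k$ (which holds because Definition~\ref{def:noam} assumes $X \subseteq [0,1]^k$), and the sequence must start with $\vec{q}_0 = (0,\dots,0)$ (which holds because Definition~\ref{def:noam} assumes $\vec{x}_0 = (0,\dots,0)$). There is a small indexing subtlety to note: in Definition~\ref{def:sergiu} the sequence $X$ is indexed $i=1,\dots,N$ while $Q$ is indexed $i=0,\dots,N$, whereas in Definition~\ref{def:noam} the sequence $X$ is indexed $i=0,\dots,N$. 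So when we write $\MenuGap{X,X}$ we are using the $i=0$ entry of $X$ only in its role as $\vec{q}_0$, and the $i\ge 1$ entries in both roles; this is exactly the data $\MenuGap{\cdot,\cdot}$ consumes. I would spell this out in one sentence so the reader sees the notation is consistent.

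Then the conclusion is immediate: since $X$ is an admissible choice of $Q$,
\[
\MenuGap{X} = \sup_{Q}\{\MenuGap{X,Q}\} \ge \MenuGap{X,X} = \SupGap{X},
\]
where the first equality is the (abuse-of-)notation from Definition~\ref{def:sergiu}, the inequality is because the supremum is over a set containing $Q=X$, and the last equality is Definition~\ref{def:noam}.

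There is essentially no obstacle here — the only thing to be careful about is the bookkeeping between the two definitions' indexing conventions and the domain restriction ($[0,1]^k$ versus $\mathbb{R}_{\ge 0}^k$) that makes $Q=X$ legal; everything else is a one-line application of "an element is at most the supremum of a set containing it." Note also that the observation does not even require $\MenuGap{X,X}$ to be nonnegative: if it is negative the inequality still holds trivially, consistent with the footnote in Definition~\ref{def:sergiu}.
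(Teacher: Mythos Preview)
Your proposal is correct and is exactly the intended argument: the paper does not even spell out a proof (the statement is labeled an ``Observation'' and left unproved), because it follows immediately from the definitions once one notices that $Q=X$ is an admissible choice in the supremum. Your care about the indexing conventions and the $[0,1]^k$ domain restriction is appropriate and makes explicit the one point that could conceivably trip up a reader.
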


Finally,~\cite{HNv1} propose an explicit construction of a sequence $X$ with infinite SupGap.

\begin{theorem}[\cite{HNv1}]\label{thm:HNgap} There exists an infinite sequence $X$ of points in $[0,1]^2$ such that:
$$\SupGap{X} = \infty.$$
\end{theorem}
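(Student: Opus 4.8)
The plan is to exhibit $X$ explicitly as a concatenation of finite \emph{sweeps}. For $s=1,2,\dots$, sweep $s$ consists of $n_s$ points lying on the circle of radius $R_s$ (in polar form, $R_s(\cos\theta,\sin\theta)$), at $n_s$ equally spaced angles in $[0,\pi/2]$ with consecutive spacing $\Delta_s$, listed in increasing order of angle. The sequence $X$ is $\vec x_0=(0,0)$ followed by sweep $1$, then sweep $2$, and so on. We take $R_1<R_2<\cdots$ with $R^\star:=\lim_s R_s\le 1$, so every point lies in $[0,1]^2$, and we set $\delta_s:=R_s-R_{s-1}$ (with $R_0:=0$). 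The crucial choices are $\delta_s\asymp 1/s^2$ — so $\sum_s\delta_s<\infty$ (radii stay bounded) yet $\sum_s\sqrt{\delta_s}=\infty$ — together with $n_s\asymp 1/\sqrt{\delta_s}$, i.e.\ $\Delta_s\asymp\sqrt{\delta_s}$.

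Positivity of all gaps is immediate. For $j<i$, write $\vec x_i=R_s(\cos\phi_i,\sin\phi_i)$ and $\vec x_j=R_{s'}(\cos\phi_j,\sin\phi_j)$ with $s'\le s$; then
\[
(\vec x_i-\vec x_j)\cdot\vec x_i \;=\; R_s^2-R_sR_{s'}\cos(\phi_i-\phi_j)\;\ge\; R_s\big(R_s-R_{s'}\big)\;\ge\;0,
\]
and the inequality is strict unless $s'=s$ and $\phi_j=\phi_i$, which never occurs (distinct angles inside a sweep; and $\vec x_0=(0,0)$ contributes the value $R_s^2>0$). Hence $\gap_i^{X,X}>0$ for all $i$, so $\SupGap{X}$ is a sum of positive terms.

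The heart of the argument is a per-point lower bound. Fix $\vec x_i$ in sweep $s$ and let $j^\star<i$ attain the minimum defining $\gap_i^{X,X}$. If $j^\star$ is in an earlier sweep then the display above gives $(\vec x_i-\vec x_{j^\star})\cdot\vec x_i\ge R_s\delta_s$. If $j^\star$ is in sweep $s$, then $(\vec x_i-\vec x_{j^\star})\cdot\vec x_i=R_s^2\big(1-\cos(\phi_i-\phi_{j^\star})\big)\ge R_s^2\cdot\tfrac{4}{\pi^2}(\phi_i-\phi_{j^\star})^2\ge \tfrac{4}{\pi^2}R_s^2\Delta_s^2$, using $1-\cos x\ge\tfrac{4}{\pi^2}x^2$ on $[0,\pi/2]$ and the fact that the nearest previously placed angle of sweep $s$ is at distance $\ge\Delta_s$ (exactly $\Delta_s$ for every point but the first of the sweep, which is covered by the previous case). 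Thus, for $s$ large enough, $\gap_i^{X,X}\ge R_1\min\!\big(\delta_s,\ \tfrac{4}{\pi^2}R_1\Delta_s^2\big)$, while $\|\vec x_i\|_1\le R_s\sqrt2\le\sqrt2$. Choosing $n_s$ (equivalently $\Delta_s$) so that $\tfrac{4}{\pi^2}R_1\Delta_s^2\ge\delta_s$ makes the minimum equal $\delta_s$, so each of the $n_s$ points of sweep $s$ contributes at least a constant multiple of $\delta_s$, and sweep $s$ contributes at least a constant multiple of $n_s\delta_s\asymp\sqrt{\delta_s}$. Summing over sweeps,
\[
\SupGap{X}\;\gtrsim\;\sum_{s\ge 1}\sqrt{\delta_s}\;\asymp\;\sum_{s\ge 1}\frac1s\;=\;\infty.
\]

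The only real content is this last step, and in particular the trade-off governing $n_s$: one wants many points per sweep (to make $n_s\delta_s$ large), but not so many that they crowd together and drive their mutual $1-\cos$ gaps below $\delta_s$ — which would replace $\delta_s$ by the smaller quantity $\tfrac{4}{\pi^2}R_1\Delta_s^2$ in the per-point bound and destroy the gain. The balance point $\Delta_s\asymp\sqrt{\delta_s}$, i.e.\ $n_s\asymp 1/\sqrt{\delta_s}$, is exactly what converts the \emph{convergent} series $\sum_s\delta_s$ (forced because the points sit in a bounded box) into the \emph{divergent} series $\sum_s\sqrt{\delta_s}$. One should also verify the harmless bookkeeping that a far earlier sweep never spoils the in-sweep estimate: any earlier-sweep point has radius $\le R_{s-1}$ and so contributes at least $R_s\delta_s$, which is already the cap appearing in the minimum. (This is also where $k\ge 2$ enters: the angular freedom of $[0,\pi/2]$ is used throughout, whereas for $k=1$ the $\vec x_i$ would be parallel and $\SupGap{X}=1$, as in Appendix~\ref{app:single}.)
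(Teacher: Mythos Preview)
The paper does not supply its own proof of Theorem~\ref{thm:HNgap}; it is quoted from~\cite{HNv1} and used as a black box. So there is no ``paper's proof'' to compare against here.

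Your argument is correct and is in fact essentially the construction of~\cite{HNv1}: concentric quarter-arcs of radii $R_s$ increasing by $\delta_s\asymp 1/s^2$, each carrying $n_s\asymp 1/\sqrt{\delta_s}$ equally spaced points. The two-case lower bound on $\gap_i^{X,X}$ (predecessor in an earlier sweep gives $\ge R_s\delta_s$; predecessor in the same sweep gives $\ge cR_s^2\Delta_s^2$) is exactly right, and the balance $\Delta_s\asymp\sqrt{\delta_s}$ is what turns the convergent series $\sum_s\delta_s$ (forced by boundedness) into the divergent $\sum_s\sqrt{\delta_s}\asymp\sum_s 1/s$. The bookkeeping you flag---first point of a sweep, the $j=0$ term, and the inequality $1-\cos x\ge \tfrac{4}{\pi^2}x^2$ on $[0,\pi/2]$---all checks out.

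One remark to avoid confusion: the explicit sequence the paper \emph{does} build, in Section~\ref{sec:description}, is not a proof of Theorem~\ref{thm:HNgap}. There all points lie on the \emph{unit} circle (no radial growth), and that sequence is engineered for the opposite purpose---to have $\ScaGap{X}\le 6$ (hence in particular finite $\SupGap{}$) while $\MenuGap{X}=\infty$. The radial increments $\delta_s$ in your construction are precisely what make $\SupGap{}$ diverge, and are precisely what the paper's Section~\ref{sec:description} sequence lacks.
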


Theorems~\ref{thm:newHN} and~\ref{thm:HNgap} together yield a two-dimensional distribution with $\rev{\realD}/\brev{\realD}=\infty$. It is also worth noting that all prior constructions follow this approach as well. For example,~\cite{BriestCKW15} provides an infinite sequence $X$ of points in $[0,1]^3$ such that $\SupGap{X} = \infty$.

\section{Our Results}\label{sec:results}
Independent of Observation~\ref{obs:trivial} and Theorem~\ref{thm:HNgap}, Theorem~\ref{thm:newHN} \emph{alone} provides a framework for constructing distributions $\realD$ so that $\rev{\realD}/\brev{\realD}$ is large: find sequences $X$ so that $\MenuGap{X}$ is large. Our goal is to understand to what extent this framework is \emph{complete} for constructing such instances. Our first main result establishes that \emph{any distribution with $\rev{\realD}/\brev{\realD} = \infty$ could have resulted from the framework induced by Theorem~\ref{thm:newHN}}. Specifically:

\begin{theorem}\label{thm:main}
For any distribution $\realD$ over $k$ items, there exists a sequence of $N$ points ($N$ can be finite, or equal to $+\infty$) $X=(\vec{x}_i)^N_{i=1}$, with each $\vec{x}_i \in \text{supp}(\realD)$, such that 
\[ \MenuGap{X} \geq \frac{\rev{\realD}}{9\brev{\realD}}. \]

In particular, if $\rev{\realD}/\brev{\realD} = \infty$, then $\MenuGap{X} = \infty$ as well.
\end{theorem}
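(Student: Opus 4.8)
The plan is to read both $X$ and a witnessing allocation sequence $Q$ off of a near-optimal mechanism. Scaling $\realD$ scales $\rev{}$ and $\brev{}$ together and leaves $\MenuGap{}$ unchanged, so assume $\brev{\realD}=1$. Fix a target $T<\rev{\realD}$ and a mechanism $M$ with $\rev{\realD,M}\geq T$; I will build a finite sequence $X$ with every $\vec{x}_i\in\text{supp}(\realD)$ and $\MenuGap{X}\geq T/9$, which by letting $T\to\rev{\realD}$ proves the bound when $\rev{\realD}<\infty$. I use exactly two properties of $M$: individual rationality, $p^M(\vec{v})\leq\vec{q}^M(\vec{v})\cdot\vec{v}\leq||\vec{v}||_1$ (comparing against the all-zero option); and incentive compatibility, $(\vec{q}^M(\vec{v})-\vec{q}^M(\vec{w}))\cdot\vec{v}\geq p^M(\vec{v})-p^M(\vec{w})$ for all $\vec{v},\vec{w}$ (comparing $\vec{v}$'s chosen option against $\vec{w}$'s).

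Bucket the support by price paid: bucket $t$ (for $t\in\mathbb{Z}$) is $\{\vec{v}:p^M(\vec{v})\in[2^t,2^{t+1})\}$, with $\realD$-mass $\pi_t$. Every $\vec{v}$ in bucket $t$ has $||\vec{v}||_1\geq p^M(\vec{v})\geq 2^t$, so posting the grand-bundle price $2^t$ gives $2^t\pi_t\leq\brev{\realD}=1$; hence $M$'s revenue from bucket $t$ is at most $2^{t+1}\pi_t\leq 2$, and summing, $\sum_t 2^t\pi_t\geq\rev{\realD,M}/2\geq T/2$. From each nonempty bucket $t$ pick a support point $\vec{x}_t$ in it whose norm $V_t:=||\vec{x}_t||_1$ is (up to a negligible factor) the smallest in that bucket; since every point of bucket $t$ then has norm $\geq V_t$, posting the grand-bundle price $V_t$ gives $\pi_t\leq 1/V_t$, i.e.\ $V_t\leq 1/\pi_t$. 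Let $T^{\ast}$ be whichever of the even-indexed or odd-indexed nonempty buckets has the larger $\sum 2^t\pi_t$ (so $\geq T/4$), set $X=(\vec{x}_t)_{t\in T^{\ast}}$ in increasing order of $t$, and set $Q=(\vec{q}^M(\vec{x}_t))_{t\in T^{\ast}}$ with $\vec{q}_0=\vec{0}$ prepended.

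For $t\in T^{\ast}$, any earlier index of $T^{\ast}$ is some $t'\leq t-2$, so incentive compatibility gives $(\vec{q}^M(\vec{x}_t)-\vec{q}^M(\vec{x}_{t'}))\cdot\vec{x}_t\geq p^M(\vec{x}_t)-p^M(\vec{x}_{t'})\geq 2^t-2^{t'+1}\geq 2^{t-1}$, while individual rationality gives $(\vec{q}^M(\vec{x}_t)-\vec{0})\cdot\vec{x}_t\geq p^M(\vec{x}_t)\geq 2^t$; hence $\gap_t^{X,Q}\geq 2^{t-1}$. Therefore, using $V_t\leq 1/\pi_t$,
\[ \MenuGap{X}\ \geq\ \MenuGap{X,Q}\ =\ \sum_{t\in T^{\ast}}\frac{\gap_t^{X,Q}}{V_t}\ \geq\ \sum_{t\in T^{\ast}}\frac{2^{t-1}}{V_t}\ \geq\ \frac12\sum_{t\in T^{\ast}}2^t\pi_t\ \geq\ \frac{T}{8}. \]
The slack between $T/8$ and the claimed $T/9$ absorbs the $(1-\eps)$ losses from $M$ only approaching the optimum, from $V_t$ being an infimum that need not be attained, and from continuous $\realD$ (for which the buckets and every estimate above are unchanged, and a support point realizing each positive-mass bucket still exists).

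I expect the chaining — turning ``many buckets, each carrying only $O(1)$ revenue'' into one large $\MenuGap{}$ — to be the crux: if one kept every nonempty bucket and ordered by price, incentive compatibility would only give $\gap_t\gtrsim p^M(\vec{x}_t)-p^M(\vec{x}_{t-1})$, which telescopes across buckets and can be arbitrarily small, whereas the geometric bucketing together with keeping only every other bucket is exactly what forces $\gap_t\geq 2^{t-1}$ while preserving $V_t\leq 1/\pi_t$. Finally, the case $\rev{\realD}=\infty$ is handled by concatenating the finite sequences $X_1,X_2,\dots$ above (built with targets $T_m=8m$, so $\MenuGap{X_m}\geq m$) into a single infinite $X$: for each $m$, the allocation sequence $Q_m$ that is $\vec{0}$ on the points of $X_1,\dots,X_{m-1}$, equals the witnessing allocations on $X_m$, and is the constant vector $\vec{c}_m$ — the coordinatewise maximum of those witnessing allocations — on every point of $X_{m+1},X_{m+2},\dots$ satisfies $\MenuGap{X,Q_m}\geq m$, since prepending all-$\vec{0}$ points leaves the gaps of $X_m$'s points unchanged (the $j=0$ term $\vec{q}_i\cdot\vec{x}_i$ is already binding there), the points before $X_m$ get gap $0$, and the points after $X_m$ get gap $\geq 0$ as $\vec{c}_m$ coordinatewise-dominates every earlier allocation and the $\vec{x}_i$ are nonnegative. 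Hence $\MenuGap{X}=\infty$.
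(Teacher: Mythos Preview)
Your proof is correct and follows essentially the same approach as the paper: geometrically bucket the prices paid under a near-optimal mechanism by powers of two, keep every other bucket so that consecutive prices differ by at least a factor of two, choose a near-minimal-$\ell_1$-norm representative from each kept bucket, and then combine incentive compatibility (to lower-bound $\gap_i$) with the grand-bundle argument $\|\vec{x}_i\|_1\cdot\pi_i\leq\brev{\realD}$ (to upper-bound $\|\vec{x}_i\|_1$). Your explicit concatenation argument for the $\rev{\realD}=\infty$ case is in fact more careful than the paper's treatment, which simply plugs in $c=\rev{\realD}/100$ without separately addressing the infinite case.
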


A complete proof of Theorem~\ref{thm:main} appears in Section~\ref{sec:converse}. Observe also that because $\MenuGap{X}$ is monotone (in the sense that adding points to $X$, anywhere, cannot possibly decrease $\MenuGap{X}$), the fact that $X$ is a subset of the support of $\realD$ (rather than the entire support) is immaterial.\footnote{Of course, if the support of $\realD$ is uncountable, then clearly the entire support of $\realD$ cannot be included in $X$.} Put another way, the important aspect in constructing $X$ is how elements in the support of $\realD$ are ordered, rather than which points are included.

Observation~\ref{obs:trivial} further provides a framework to construct sequences so that $\MenuGap{X}$ is large: construct sequences $X$ so that $\SupGap{X}$ is large. One may then wonder if $\SupGap{X}$ and $\MenuGap{X}$ are approximately related, for all $X$. For this specific question, the answer is trivially no, due to incompatibility with scaling (multiplying every point in $X$ by $1/2$ will decrease $\SupGap{X}$ by a factor of $2$, but not $\MenuGap{X}$). Therefore, not much insight is gained by studying this precise question.

Instead, we observe that the interesting aspect of constructions resulting through $\SupGap{X}$ is that $\vec{x}_i$ and $\vec{q}_i$ are \emph{aligned} (that is, $\vec{x}_i = c_i \cdot \vec{q}_i$ for some $c_i \in \mathbb{R}_{\geq 0}$). Specifically, even if $\vec{q}_i = \vec{x}_i$, this equality is not maintained through the construction of Theorem~\ref{thm:newHN}. However, if $\vec{q}_i$ and $\vec{x}_i$ are aligned, this alignment property \emph{is} maintained by the construction. We therefore propose the following definition, which captures the maximum value achievable by $\MenuGap{X,Q}$ when $X,Q$ are aligned.

\begin{definition}\label{def:align} Let $X = (\vec{x}_i)_{i=1}^N$ be an ordered sequence of $N$ points in $[0,1]^k$ ($N$ may be finite, or equal to $+\infty$). Let also $C=(c_i)_{i=0}^N$ be an ordered sequence of numbers, with each $c_i \in [0,1/||\vec{x}_i||_\infty]$, starting with $c_0=0$. Define $\sgap^{\cdot,\cdot}$ (read "scalar gap") :$$\sgap^{X,C}_i:= \min_{j < i} \vec{x}_i \cdot (c_i \vec{x}_i - c_j \vec{x}_j),\qquad  \text{ and } \qquad \ScaGap{X,C}:= \sum_{i=1}^N \frac{\max\{0,\sgap^{X,C}_i\}}{||\vec{x}_i||_1}.$$

We will also slightly abuse notation and denote by $\ScaGap{X}:=\sup_C\{\ScaGap{X,C}\}$.
\end{definition}

Recall that we have chosen to let $c_i$ range in $[0,1/||\vec{x}_i||_\infty]$ (rather than be fixed at $1$, or $1/||\vec{x}_i||_\infty$) to give potential constructions flexibility in scaling $\vec{q}_i$. Additionally, by ensuring that the contribution of each $\sgap_i^{X,C}$ is non-negative, we give potential constructions flexibility to ignore points in the sequence. That is, any construction using MenuGap directly can always set $\vec{q}_i:=\arg\max_{j < i}\{\vec{q}_j\cdot \vec{x}_i\}$, which effectively just drops $\vec{x}_i$ from the sequence. Counting $\max\{0,\sgap_i^{X,C}\}$ towards the objective (rather than just $\sgap_i^{X,C}$) gives constructions that arise through AlignGap the same flexibility. 

\begin{lemma}\label{lem:align} For all $X$, $\ScaGap{X} \leq \MenuGap{X}$.
\end{lemma}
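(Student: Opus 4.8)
The plan is to prove the stronger local statement: for \emph{every} admissible scalar sequence $C=(c_i)_{i=0}^N$ (i.e.\ $c_0=0$ and $c_i\in[0,1/\|\vec{x}_i\|_\infty]$) there is an admissible $Q$ with $\MenuGap{X,Q}\ge\ScaGap{X,C}$; taking the supremum over $C$ on both sides then gives $\MenuGap{X}\ge\ScaGap{X}$. The only thing blocking the obvious choice $\vec{q}_i:=c_i\vec{x}_i$ is the mismatch between the two definitions: $\ScaGap{X,C}$ truncates each summand to $\max\{0,\sgap_i^{X,C}\}$, whereas $\MenuGap{X,Q}$ sums $\gap_i^{X,Q}$ untruncated, and $\sgap_i^{X,C}$ can be negative. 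So I would first \emph{drop} the bad indices: set $S:=\{i\in[N]:\sgap_i^{X,C}\ge 0\}$, let $X'$ be the subsequence $(\vec{x}_i)_{i\in S}$ of $X$, and let $C'=(c_i)_{i\in S\cup\{0\}}$ be the corresponding subsequence of $C$ (retaining $c_0=0$ at the front).

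Next I would take $Q'$ to be the aligned sequence for $(X',C')$, namely $\vec{q}'_i:=c_i\vec{x}_i$ for $i\in S\cup\{0\}$. This is admissible: $\vec{q}'_0=\vec{0}$, and every coordinate of $\vec{q}'_i$ is in $[0,1]$ since $0\le c_i\le 1/\|\vec{x}_i\|_\infty$ and $\vec{x}_i\ge\vec{0}$. For $i\in S$, unwinding the definitions (using $\vec{q}'_j=c_j\vec{x}_j$),
\[ \gap_i^{X',Q'}=\min_{j\in S\cup\{0\},\,j<i}(\vec{q}'_i-\vec{q}'_j)\cdot\vec{x}_i=\min_{j\in S\cup\{0\},\,j<i}\vec{x}_i\cdot(c_i\vec{x}_i-c_j\vec{x}_j)=\sgap_i^{X',C'}. \]
Since $\{j\in S\cup\{0\}:j<i\}\subseteq\{0,1,\dots,i-1\}$, minimizing over the smaller set can only increase the value, so $\sgap_i^{X',C'}\ge\sgap_i^{X,C}\ge 0$, i.e.\ $\gap_i^{X',Q'}\ge\max\{0,\sgap_i^{X,C}\}$ for $i\in S$; and each $i\notin S$ contributes exactly $0$ to $\ScaGap{X,C}$. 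Summing (all summands are nonnegative, so the series behave well termwise),
\[ \MenuGap{X',Q'}=\sum_{i\in S}\frac{\gap_i^{X',Q'}}{\|\vec{x}_i\|_1}\ \ge\ \sum_{i\in S}\frac{\max\{0,\sgap_i^{X,C}\}}{\|\vec{x}_i\|_1}\ =\ \sum_{i=1}^{N}\frac{\max\{0,\sgap_i^{X,C}\}}{\|\vec{x}_i\|_1}\ =\ \ScaGap{X,C}. \]
Because $X$ arises from $X'$ by inserting the dropped points back at their original positions, the monotonicity of $\MenuGap{\cdot}$ noted after \thmref{thm:main} gives $\MenuGap{X}\ge\MenuGap{X'}\ge\MenuGap{X',Q'}\ge\ScaGap{X,C}$. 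Taking $\sup_C$ finishes the argument.

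The only real (and mild) subtlety is the dropping step: one must check that, after passing to $X'$, $\gap_i^{X',Q'}$ is genuinely a minimum over a \emph{sub}set of the original predecessor indices, which is exactly what makes it dominate the truncated quantity $\max\{0,\sgap_i^{X,C}\}$. If one wishes to avoid invoking monotonicity of $\MenuGap{\cdot}$ and work on all of $X$ directly, the same proof goes through by instead defining $\vec{q}_i:=c_i\vec{x}_i$ for $i\in S$ and, for $i\notin S$, $\vec{q}_i:=\vec{q}_{j^\star}$ with $j^\star:=\argmax_{0\le j<i}\vec{q}_j\cdot\vec{x}_i$ (the ``ignore a point'' move described in the paragraph preceding the lemma). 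An easy induction shows $\{\vec{q}_j:0\le j<i\}=\{c_j\vec{x}_j:j\in(S\cup\{0\}),\,j<i\}$ for all $i$, so the copied choices add no new comparison vectors; then $\gap_i^{X,Q}=0$ for $i\notin S$ and $\gap_i^{X,Q}\ge\sgap_i^{X,C}$ for $i\in S$ as before, again yielding $\MenuGap{X,Q}\ge\ScaGap{X,C}$.
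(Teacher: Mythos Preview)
Your proof is correct, and your alternative route (defining $\vec{q}_i:=c_i\vec{x}_i$ on $S$ and copying $\vec{q}_{j^\star}$ off $S$) is exactly the paper's argument. Your primary route --- dropping the negative-gap indices to form $X'$ and then invoking the monotonicity of $\MenuGap{\cdot}$ to pass back to $X$ --- is a mild repackaging of the same idea: the monotonicity step is proved precisely by the ``copy a previous $\vec{q}$'' move, so the two presentations unwind to the same construction.
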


The proof of Lemma~\ref{lem:align} is in Appendix~\ref{app:omitted}. Lemma~\ref{lem:align} induces a framework to design sequences with large MenuGap: design sequences with large AlignGap. Our second main result establishes that this framework \emph{is not} without loss of generality, even for $k=2$. Specifically:

\begin{theorem}\label{thm:main2}
There exist sequences $X=(\vec{x}_i)_{i=1}^\infty \in [0,1]^2$ such that:

$$\ScaGap{X} \leq 6 \qquad \text{ but } \qquad \MenuGap{X} = \infty.$$
\end{theorem}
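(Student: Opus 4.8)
The plan is to construct an explicit sequence $X = (\vec{x}_i)_{i=1}^\infty$ in $[0,1]^2$ for which the unrestricted choice of allocation vectors $Q$ can do dramatically better than any \emph{aligned} choice $C$. The guiding intuition is that $\ScaGap{X}$ forces $\vec{q}_i = c_i \vec{x}_i$ to point in exactly the direction of $\vec{x}_i$, whereas $\MenuGap{X}$ allows $\vec{q}_i$ to be chosen freely in $[0,1]^2$; the freedom to ``tilt'' $\vec{q}_i$ away from $\vec{x}_i$ should let a small number of carefully placed points contribute a constant to the gap, so that by concatenating infinitely many rescaled blocks we get $\MenuGap{X} = \infty$ while keeping $\ScaGap{X}$ bounded. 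So I would first isolate a finite gadget: a short block of points $B = (\vec{x}_1, \dots, \vec{x}_m)$ in $[0,1]^2$ together with a witnessing $Q$ for which $\MenuGap{B,Q}$ is at least some absolute constant $\delta > 0$, but for which $\ScaGap{B,C}$ is small (say at most $\delta/2$, or even $o(1)$ as we shrink the block) for \emph{every} aligned $C$. The natural candidate is a pair (or short run) of nearly-parallel vectors near a coordinate axis: when $\vec{x}_i$ and $\vec{x}_j$ are almost parallel, $\vec{x}_i \cdot (c_i \vec{x}_i - c_j \vec{x}_j)$ is tiny for any legal scalars because the scalars are constrained by the $\|\cdot\|_\infty$ normalization, but $(\vec{q}_i - \vec{q}_j)\cdot \vec{x}_i$ can be made $\Theta(\|\vec{x}_i\|_1)$ by choosing $\vec{q}_i$ to be a full coordinate vector like $(1,0)$ while $\vec{q}_j$ points the other way.

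The second step is the amplification: take countably many scaled copies $B^{(t)}$ of the gadget, where $B^{(t)}$ lives on a vanishingly small scale $\lambda_t \to 0$ and is placed (angularly) so that the blocks do not interfere — e.g. each block sits in its own thin angular sector, or is ordered so that every cross-block comparison $(\vec q_i - \vec q_j)\cdot \vec x_i$ with $j$ in an earlier block is non-negative (hence harmless, since the relevant min is over all earlier $j$). Because $\MenuGap{\cdot,\cdot}$ is scale-invariant in the sense that dividing all of $\vec{x}_i$ by a constant does not change $\gap_i/\|\vec{x}_i\|_1$ (numerator and denominator scale together), each block still contributes $\ge \delta$ to $\MenuGap{X, Q}$, so the infinite concatenation gives $\MenuGap{X} = \infty$. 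Meanwhile $\ScaGap{X}$ must be controlled \emph{globally} over all aligned $C$; here I would argue that within each block the aligned contribution is at most $\delta/2^t$ (by shrinking the gadget's internal angular spread with $t$, or by a direct estimate using that $c_i \le 1/\|\vec x_i\|_\infty$), and that cross-block aligned contributions are also summable, so $\ScaGap{X} \le \sum_t \delta/2^t + (\text{cross terms}) \le 6$.

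The main obstacle — and the part that needs genuine care rather than routine calculation — is the \emph{global} upper bound $\ScaGap{X} \le 6$: unlike $\MenuGap{X,Q}$, which we only need to lower-bound for one convenient $Q$, $\ScaGap{X}$ is a supremum over all aligned sequences $C$, so I must show no adaptive, globally-coordinated choice of scalars $c_i$ can exploit interactions \emph{between} blocks (for instance, using a large $c_j$ from an early coarse-scale block to make some later $\sgap_i$ large). The key structural fact to nail down is that $\vec{x}_i \cdot (c_i \vec{x}_i - c_j \vec{x}_j) \le \vec{x}_i \cdot c_i \vec{x}_i = c_i \|\vec{x}_i\|_2^2 \le \|\vec{x}_i\|_2^2 / \|\vec{x}_i\|_\infty \le \|\vec{x}_i\|_1$, which already gives $\sgap_i^{X,C} \le \|\vec{x}_i\|_1$ and hence each normalized term is at most $1$ — so the real work is showing that, across the whole sequence, only boundedly many terms can be bounded away from $0$, which is where the geometry of the construction (near-parallel within blocks, well-separated across blocks, geometrically shrinking scales) must be used to force $\min_{j<i} \vec{x}_i\cdot(c_i\vec x_i - c_j \vec x_j)$ to be small or negative for all but finitely many $i$. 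Once that is established, combining the two bounds proves the theorem; Corollary~\ref{cor:main} then follows by feeding $X$ and its non-aligned witness $Q$ into Theorem~\ref{thm:newHN}.
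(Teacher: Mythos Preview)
Your plan has a genuine gap in the $\MenuGap{X}=\infty$ direction. You propose a fixed finite gadget $B$ with $\MenuGap{B,Q}\ge\delta>0$ and then concatenate rescaled copies, arguing that scale-invariance of $\gap_i/\|\vec x_i\|_1$ makes each copy contribute $\delta$. But the sequence $Q$ is \emph{global}: for a point $\vec x_i$ in block $t$, the minimum defining $\gap_i^{X,Q}$ ranges over \emph{all} earlier $\vec q_j$, including those from blocks $1,\dots,t-1$. Since every $\vec q_j$ lives in the fixed box $[0,1]^2$, a gadget that extracts a constant $\delta$ necessarily consumes a constant portion of this box (e.g.\ your example sends $\vec q$ out to a corner like $(1,0)$). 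Once that corner has been used, every subsequent block sees it in its min, and you cannot reuse it to harvest another $\delta$. Rescaling the $\vec x$'s does nothing to help: $Q$ is not rescaled, and it is the $Q$-budget, not the $X$-scale, that is the binding resource. Concretely, if all the $\vec x_i$ in block $t$ are near a fixed direction $\vec u$, then $\sum_{i\in\text{block }t}\gap_i^{X,Q}/\|\vec x_i\|_1$ is essentially bounded by the total increase of $\vec q\cdot\vec u$ across that block, and these increases over all blocks telescope to at most a constant.

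The paper's construction avoids this by a different mechanism. Each layer $\ell$ is allotted only a \emph{vanishing} slice $\delta_\ell$ of the $Q$-box (with $\sum_\ell\delta_\ell<\infty$, so the budget is never exhausted), but the layer contains $n_\ell\approx \ell\ln^2\ell$ points sweeping the full quarter-circle, and a clever choice of $Q$ (with second coordinate $1-\delta_\ell\cot((j{+}1)\theta_\ell)$) makes $\gap_{\ell,j}\approx \delta_\ell/\sin((j{+}1)\theta_\ell)$, so the layer contributes a harmonic sum $\approx\delta_\ell\ln n_\ell$. The divergence is then $\sum_\ell \delta_\ell\ln n_\ell\approx\sum_\ell 1/(\ell\ln\ell)=\infty$, \emph{not} a sum of constants. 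Your ``short run of nearly-parallel vectors'' cannot produce this logarithmic amplification; indeed, by the argument in Appendix~\ref{app:single}, when the points in a block are nearly parallel the block's MenuGap is at most $1+o(1)$, and that $1$ is exactly the $Q$-budget you burn.

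On the $\ScaGap{X}\le 6$ side, your sketch (``only boundedly many terms can be bounded away from $0$'') is not how the bound goes and would be hard to make work against an adaptive $C$. The paper instead takes a Lagrangian relaxation that keeps only the consecutive constraints $\sgap_i\le \vec x_i\cdot(c_i\vec x_i-c_{i-1}\vec x_{i-1})$ with multiplier $1$; after telescoping, the relaxed objective becomes $\sum_i c_i\,\vec x_i\cdot(\vec x_i-\vec x_{i+1})$, which for unit vectors with consecutive angle $\theta_\ell$ is $\sum_\ell (n_\ell{-}1)(1-\cos\theta_\ell)\approx\sum_\ell 1/n_\ell<\infty$. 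This duality step is the actual engine of the upper bound and is missing from your plan.
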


A complete proof of Theorem~\ref{thm:main2} appears in Section~\ref{sec:sep}. By the discussion following Definition~\ref{def:align}, the source of this gap is entirely due to the requirement that the sequence $Q$ be aligned with $X$ (and is not due to inability to scale, or inability to ignore difficult points in $X$). We make this crisp with the following definition and corollary, which construct a novel distribution witnessing $\rev{D}/\brev{D} = \infty$ that is provably distinct from all previous approaches.

\begin{definition} For a distribution $\realD$ and mechanism $M$, define the \emph{Aligned Revenue} of $M$ on $\realD$:
$$\arev{\realD,M}:=\mathbb{E}_{\vec{v} \leftarrow \realD}[p^M(\vec{v}) \cdot I(\text{$\vec{v}$ is parallel to $\vec{q}^M(\vec{v})$})], \qquad \quad \arev{\realD}:=\sup_M\{\arev{\realD,M}\}.$$
\end{definition}

\begin{corollary}\label{cor:main} There exist distributions $\realD$ over two items such that $\rev{\realD}/\arev{\realD} = \infty$.
\end{corollary}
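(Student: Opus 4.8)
The plan is to feed the sequence $X$ from Theorem~\ref{thm:main2} into Theorem~\ref{thm:newHN} to obtain a two-item distribution $\realD$ with $\rev{\realD}=\infty$ and $\brev{\realD}<\infty$, and then to show that $\arev{\realD}$ is finite because $\realD$ lives only on scalings of the points $\vec x_i$, so that any mechanism's aligned revenue on $\realD$ is governed by $\ScaGap{X}\le 6$.

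\emph{Step 1 (the distribution).} Let $X=(\vec x_i)_{i=1}^\infty\subseteq[0,1]^2$ be the sequence of Theorem~\ref{thm:main2}, so $\ScaGap{X}\le 6$ while $\MenuGap{X}=\sup_Q\MenuGap{X,Q}=\infty$. Fix a $Q$ with $\MenuGap{X,Q}=\infty$; one may take a single such $Q$ --- either it is produced directly by the proof of Theorem~\ref{thm:main2}, or it follows from $\MenuGap{X}=\infty$ by truncating to prefixes of $X$ and discarding points with $\gap_i<0$, as noted after Definition~\ref{def:align}. Theorem~\ref{thm:newHN} applied to $(X,Q)$ with $\varepsilon=\tfrac12$ then yields a distribution $\realD$ over $k=2$ items, with $\rev{\realD}\ge\tfrac12\MenuGap{X,Q}\cdot\brev{\realD}=\infty$, supported exactly on $\{c_i\vec x_i\}_{i\ge1}$ for positive scalars $c_i$, and with $\brev{\realD}$ a finite positive constant (by inspection of the construction, which normalizes the bundled revenue). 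It remains to bound $\arev{\realD}$.

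\emph{Step 2 (bounding aligned revenue).} I would prove an aligned analogue of Theorem~\ref{thm:main}: namely, that the construction underlying its proof (Section~\ref{sec:converse}), when applied to a mechanism $M$ that is aligned on $\text{supp}(\realD)$ (meaning $\vec q^M(\vec v)$ is parallel to $\vec v$ throughout the support), outputs a pair $(X',Q')$ with $\vec q'_i$ parallel to $\vec x'_i$; such a pair satisfies $\ScaGap{X'}\ge\MenuGap{X',Q'}\ge\arev{\realD,M}/(9\,\brev{\realD})$, so taking the supremum over aligned $M$ gives $\arev{\realD}\le 9\,\brev{\realD}\cdot\ScaGap{X'}$, where the points of $X'$ all lie in $\text{supp}(\realD)$. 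Since $\text{supp}(\realD)=\{c_i\vec x_i\}$ carries just one scaling per direction $\vec x_i\in X$, and these scalars factor out of Definition~\ref{def:align} (rescaling the admissible range of $C$ correspondingly), $\ScaGap{X'}$ equals the AlignGap of a reordered subsequence of $X$. If that quantity can be bounded by $\ScaGap{X}\le 6$, then $\arev{\realD}\le 54\,\brev{\realD}<\infty$, and with Step 1 this gives $\rev{\realD}/\arev{\realD}=\infty$.

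\emph{Main obstacle.} The crux is the order-sensitivity of $\sgap$ and of AlignGap: $\sgap_i$ constrains only deviations to \emph{earlier} points, so an adversarial aligned mechanism might force the converse argument to order $\text{supp}(\realD)$ in a way whose AlignGap is much larger than $6$, or even infinite. The plan to close this gap is to use that Theorem~\ref{thm:newHN}'s construction pins down the magnitudes $c_i\norm{\vec x_i}$ and probability weights of $\realD$ precisely so that only ``$X$-ordered'' deviations are individually rational and revenue-relevant, forcing the order produced by the aligned converse to refine that of $X$; absent such an argument one would instead strengthen Theorem~\ref{thm:main2} to output a sequence $X$ whose AlignGap stays $O(1)$ uniformly over all reorderings and subsequences --- a robustness property the explicit construction there should admit. (The two ``by inspection'' claims in Step 1 --- that a single $Q$ with infinite MenuGap suffices, and that $\brev{\realD}<\infty$ --- are routine secondary obstacles.)
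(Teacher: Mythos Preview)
Your Step 1 is correct and matches the paper: feed the sequence $X$ of Theorem~\ref{thm:main2} into the Hart--Nisan construction of Theorem~\ref{thm:newHN} to obtain $\realD$ with $\rev{\realD}=\infty$ and $\brev{\realD}<\infty$.

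Your Step 2, however, takes a detour through the converse (Theorem~\ref{thm:main}) that the paper avoids, and the obstacle you flag is real. The paper does \emph{not} invoke Section~\ref{sec:converse} at all here. Instead it proves directly (Proposition in Appendix~\ref{app:main}) that for the specific Hart--Nisan distribution, every mechanism $M$ satisfies $\arev{\realD,M}\le \ScaGap{X}+1/B$. The argument is short: for each $\vec v_i$ that happens to be aligned under $M$, set $c_i:=\|\vec q^M(\vec v_i)\|_2/\|\vec x_i\|_2$; for non-aligned $\vec v_i$ set $c_i:=0$. Incentive compatibility of $M$ then gives $p^M(\vec v_i)\le p^M(\vec v_j)+B^{2^i}\,\vec x_i\cdot(c_i\vec x_i-c_j\vec x_j)/\|\vec x_i\|_1$ for every $j<i$, and because $p^M(\vec v_j)\le \|\vec v_j\|_1=B^{2^j}\ll B^{2^i}$ the first term is negligible. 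Summing against the probabilities $1/B^{2^i}$ yields $\arev{\realD,M}\le \ScaGap{X}+O(1/B)$. Crucially, the exponentially separated scales in the Hart--Nisan construction force the only relevant IC constraints to be those with $j<i$ \emph{in the original order of $X$}, so no reordering issue arises and no assumption that $M$ is globally aligned is needed.

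Your proposed route runs into both of these points. First, you assume $M$ is aligned on all of $\mathrm{supp}(\realD)$, but the mechanism achieving $\arev{\realD}$ need not be; the paper's argument handles arbitrary $M$ by simply setting $c_i=0$ at non-aligned points (this only helps, since $\vec v_i\cdot\vec q^M(\vec v_j)\ge 0$). Second, your fix ``option 1'' (use the HN scales to force the converse's price-ordering to refine the $X$-ordering) is essentially the paper's direct argument, at which point the machinery of Section~\ref{sec:converse} becomes dead weight. Your ``option 2'' (uniform $O(1)$ AlignGap under all reorderings) is unlikely to hold: the directions in the Section~\ref{sec:description} construction are dense on the quarter circle, so a suitable reordering can approximate any sequence, including one with infinite $\SupGap$.
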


A proof of Corollary~\ref{cor:main} appears in Appendix~\ref{app:main}. It is worth noting that \emph{all} previous constructions establishing $\rev{\realD}/\brev{\realD} = \infty$ proceeded by producing an $X$ such that $\SupGap{X} = \infty$. Indeed, the~\cite{BriestCKW15} construction provides such an $X$ when $k=3$, the~\cite{HNv1} construction provides an $X$ when $k=2$, and the~\cite{PsomasSW19} construction adapts parameters in that of~\cite{HNv1}. By Theorem~\ref{thm:newHN}, this implies not only that $\rev{\realD}/\brev{\realD} = \infty$, but also that $\arev{\realD}/\brev{\realD} = \infty$. Corollary~\ref{cor:main} establishes the existence of a fundamentally different construction,\footnote{On a technical level, our construction certainly borrows several ideas from previous ones, however.} as our $\realD$ has a finite ratio between $\arev{\realD}/\brev{\realD}$, yet still maintains an infinite ratio between $\rev{\realD}/\brev{\realD}$. 

\subsection*{Additional Related Work} 
\label{sec:rel}
We've already discussed the most related work to ours, which is that of~\citet{hart2019selling, BriestCKW15}. There is also a large body of work studying \emph{product distributions} specifically, and establishes that simple mechanisms can achieve constant factor approximations in quite general settings~\citet{ChawlaHK07, ChawlaHMS10, ChawlaMS15, KleinbergW12, LiY13, BabaioffILW14, Yao15, RubinsteinW15, CaiDW16, ChawlaM16, CaiZ17}). Recent works have made progress in obtaining arbitrary approximations~(\citet{BabaioffGN17, KothariMSSW19}), which again rely on the assumption that $\realD$ is a product distribution. 

Three recent lines of work address the~\cite{BriestCKW15,hart2019selling} constructions in a different manner. First,~\citet{ChawlaTT19,ChawlaTT20} consider the related \emph{buy-many} model (where the auctioneer cannot prevent the buyer from interacting multiple times with the auction). \citet{ChawlaTT19} establishes that selling separately achieves an $O(\log k)$-approximation to the optimal buy-many mechanism in \emph{quite} general settings (including the settings considered in this paper). In a different direction,~\citet{PsomasSW19} uses the lens of smoothed analysis~(\citet{SpielmanT04}) to reason about the robustness of the~\cite{hart2019selling} constructions. Finally, \citet{carroll2017robustness} considers a correlation robust framework in which the valuation profile is drawn from a correlated distribution that is not completely known to the seller; the goal is to design a mechanism that maximizes the worst-case (over correlations) seller revenue, when only the items' marginal distributions are known. \cite{carroll2017robustness} shows that selling each item separately is optimal; see \citet{bei2019correlation,gravin2018separation} for further work in this model.

\section{A Converse to Theorem~\ref{thm:newHN}: The~\cite{hart2019selling} Framework is WLOG}\label{sec:converse}
In this section, we prove Theorem~\ref{thm:main}. Our proof has two main parts. First, we will take the optimal auction for $\realD$ (or one that is arbitrarily close to optimal) and repeatedly simplify it through a sequence of lemmas, at the cost of small constant-factors of revenue. The second part takes this simple menu and draws a connection to MenuGap.

\subsection{Simplifying the Optimal Mechanism}
\label{sec:simple}

We show that for every $\realD$, an approximately-optimal mechanism exists satisfying some useful properties. Our first step simply argues that we may ignore menu options with low prices.

\begin{definition} We say that a mechanism $M$ is \emph{$c$-expensive} if every option has price at least $c$.
\end{definition}

\begin{claim} 
\label{claim:price}
For all $c \in \mathbb{R}_{\geq 0}$, all distributions $\realD$, and all mechanisms $M$, there exists a $c$-expensive mechanism $M'$ satisfying $\rev{\realD, M'} \geq \rev{\realD,M}-c$. 
\end{claim}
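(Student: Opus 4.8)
The plan is to obtain $M'$ from $M$ by simply discarding every menu option whose price is strictly less than $c$ (keeping the mandatory all-zero option), so that $M'$ is $c$-expensive essentially by construction, and then to argue that the only revenue destroyed is revenue that was already being collected, in $M$, from buyers paying less than $c$ --- and that is at most $c$ in expectation.

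Concretely, I would first partition the support of $\realD$ according to which option of $M$ each valuation buys: for each option $(\vec{q}_i,p_i)$ of $M$ set $A_i := \{\vec{v} : \vec{q}^M(\vec{v}) = \vec{q}_i\}$, so that $\rev{\realD,M} = \sum_i p_i \Pr_{\realD}[A_i]$. The cheap buyers contribute little, since $\sum_{i : p_i < c} p_i \Pr_{\realD}[A_i] \le c\sum_i \Pr_{\realD}[A_i] = c$, and hence $\rev{\realD,M} \le c + \sum_{i : p_i \ge c} p_i \Pr_{\realD}[A_i]$. It therefore suffices to show $\rev{\realD,M'} \ge \sum_{i : p_i \ge c} p_i \Pr_{\realD}[A_i]$. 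Fix $\vec{v} \in A_i$ with $p_i \ge c$: the option $(\vec{q}_i,p_i)$ survives in $M'$, and because passing from $M$ to the sub-menu $M' \subseteq M$ cannot increase the utility $\vec{v}\cdot\vec{q}_j - p_j$ of any option while leaving that of $(\vec{q}_i,p_i)$ unchanged, $(\vec{q}_i,p_i)$ is still a utility-maximizing option for $\vec{v}$ in $M'$. So $\vec{v}$ buys in $M'$ some option of price at least $p_i$, giving $\rev{\realD,M'} \ge \sum_{i : p_i \ge c} p_i \Pr_{\realD}[A_i] \ge \rev{\realD,M}-c$. (Options with negative price can be deleted at the outset, since they never yield positive revenue.)

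The one delicate point --- the step I expect to need the most care --- is the final claim that $\vec v$ pays at least $p_i$ in $M'$: in principle $\vec{v}$ could be indifferent in $M'$ between $(\vec{q}_i,p_i)$ and a cheaper surviving option and, under an adversarial tie-breaking rule, switch to the cheaper one (this cannot occur in $M$ itself, where $(\vec{q}_i,p_i)$ is the option actually selected, but it can occur after deleting options that shrink the tie set). Under the standard seller-favorable tie-breaking convention this is immediate, since $(\vec{q}_i,p_i)$ remains a maximum-price element of the (now smaller) tie set. For fully arbitrary tie-breaking I would instead discard all options of price below $c+\delta$ and then perturb each surviving price downward by a distinct amount less than $\delta$, ordered so that within any original tie set the highest-priced survivor strictly wins; a continuity-of-measure argument shows the extra revenue lost to buyers who ``flip'' to a not-quite-optimal option tends to $0$ as $\delta \to 0$, and taking the limit (using that $\rev{\realD,\cdot}$ is defined as a supremum) recovers the bound with a genuinely $c$-expensive mechanism.
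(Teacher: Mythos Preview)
Your approach is essentially the paper's: delete every menu option with price below $c$ and observe that any buyer who was paying at least $c$ in $M$ still has her original option available in $M'$, so she pays at least as much as before (hence $p^{M'}(\vec v)\geq p^M(\vec v)-c$ pointwise). The paper's proof is even terser and does not discuss the tie-breaking subtlety you flag; your treatment of that issue is more careful than the original.
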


\begin{proof}
Take $M'$ to be exactly the same as $M$, except having removed all entries with price $<c$. For every value in the support of $\realD$ with $p^M(\vec{v}) \geq c$ in $M$, we still have $p^{M'}(\vec{v}) \geq c$. This is simply observing that $\vec{v}$'s favorite option in $M$ is still available in $M'$, and all options in $M'$ were also available in $M$. For any value with $p^M(\vec{v}) < c$, we clearly have $p^{M'}(\vec{v}) \geq 0$. So for all $\vec{v}$, we have $p^{M'}(\vec{v}) \geq p^M(\vec{v}) - c$, and the claim follows by taking an expectation with respect to $\vec{v}$.
\end{proof} 

Our next step will show that we can assume further structure on the prices charged, at the cost of a factor of two.

\begin{definition}
A $c$-expensive mechanism $M$ is \emph{oddly-priced} (respectively, \emph{evenly-priced}) if for all $\vec{v}$, there exists an odd (respectively, even) integer $i$ such that $p^M(\vec{v}) \in [c \cdot 2^i, c \cdot 2^{i+1})$. 
\end{definition}

\begin{claim}\label{claim:buckets}
For all $c$-expensive mechanisms $M$, and all $\realD$, there exists either an oddly-priced or evenly-priced $c$-expensive mechanism $M'$ satisfying $\rev{\realD,M'} \geq \rev{\realD,M}/2$. 
\end{claim}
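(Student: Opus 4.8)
The plan is to prove Claim~\ref{claim:buckets} by a simple averaging (pigeonhole) argument over a geometric partition of the price axis. Since $M$ is $c$-expensive, every price $p^M(\vec{v})$ lies in $[c,\infty)$, and the half-open intervals $[c\cdot 2^i, c\cdot 2^{i+1})$ for $i = 0,1,2,\dots$ partition $[c,\infty)$. Each such interval has a well-defined parity of its index $i$. The idea is that either the ``odd-indexed'' intervals or the ``even-indexed'' intervals capture at least half of the expected revenue, and we keep exactly the menu options whose prices land in the chosen parity class.

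Concretely, I would first define, for a real number $x \geq c$, the index $\iota(x)$ to be the unique integer $i \geq 0$ with $x \in [c\cdot 2^i, c\cdot 2^{i+1})$; then partition the support of $\realD$ (up to measure zero) into $V_{\mathrm{odd}} = \{\vec{v} : \iota(p^M(\vec{v})) \text{ is odd}\}$ and $V_{\mathrm{even}} = \{\vec{v} : \iota(p^M(\vec{v})) \text{ is even}\}$. Since $\rev{\realD,M} = \E_{\vec{v}}[p^M(\vec{v})\cdot I(\vec{v}\in V_{\mathrm{odd}})] + \E_{\vec{v}}[p^M(\vec{v})\cdot I(\vec{v}\in V_{\mathrm{even}})]$, at least one of the two summands is $\geq \rev{\realD,M}/2$; say without loss of generality it is the odd one. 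Now take $M'$ to be $M$ with all menu options whose price has even index $\iota$ removed (as in the proof of Claim~\ref{claim:price}, the all-zero option is retained, so $M'$ is still individually rational, and it remains $c$-expensive since we only removed options).

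The one thing that needs an argument — and I expect this to be the only real subtlety — is the analogue of the monotonicity observation from Claim~\ref{claim:price}: that removing the even-indexed options does not hurt the revenue collected from $V_{\mathrm{odd}}$. For this I would argue that for every $\vec{v} \in V_{\mathrm{odd}}$, $\vec{v}$'s favorite option in $M$ (which has odd index and hence survives) is still present in $M'$, and every option of $M'$ was already an option of $M$; therefore $\vec{v}$'s favorite option in $M'$ is at least as good (in quasi-linear utility) as its favorite option in $M$, and in fact $\vec{v}$ gets exactly the same allocation/price as in $M$ unless it strictly prefers some other surviving option — but any surviving option it could switch to was available in $M$ too, so by optimality of its $M$-choice it cannot strictly prefer it; hence $p^{M'}(\vec{v}) = p^M(\vec{v})$ for all $\vec{v}\in V_{\mathrm{odd}}$ (ties broken consistently; the footnote in the preliminaries frees us from worrying about tie-breaking). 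For $\vec{v}\notin V_{\mathrm{odd}}$ we only need $p^{M'}(\vec{v}) \geq 0$, which holds. Taking expectations, $\rev{\realD,M'} \geq \E_{\vec{v}}[p^M(\vec{v})\cdot I(\vec{v}\in V_{\mathrm{odd}})] \geq \rev{\realD,M}/2$. The symmetric case gives an evenly-priced $M'$, completing the proof.
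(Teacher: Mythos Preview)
Your proposal is correct and follows essentially the same approach as the paper. The paper defines both sub-menus $M_1$ (odd-indexed prices) and $M_2$ (even-indexed prices), observes pointwise that $p^{M_1}(\vec{v}) + p^{M_2}(\vec{v}) \geq p^M(\vec{v})$ (since $\vec{v}$'s favorite option from $M$ survives in one of them and remains favorite there), and then averages; you instead partition the support first and then build the single sub-menu corresponding to the heavier half, but the core monotonicity observation and the factor-two loss are identical.
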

\begin{proof}
Simply let $M_1$ denote the set of menu options from $M$ whose price lies in $[c\cdot 2^i,c\cdot 2^{i+1})$ for an odd integer $i$, and $M_2$ denote the remaining menu options (which lie in $[c \cdot 2^i,c \cdot 2^{i+1})$ for an even power of $i$). It is easy to see that $M_1$ is oddly-priced and $M_2$ is evenly-priced. Then for all $\vec{v}$, we must have $p^{M_1}(\vec{v}) + p^{M_2}(\vec{v}) \geq p^M(\vec{v})$. This is because $\vec{v}$'s favorite menu option from $M$ appears in one of the two menus, and is necessarily $\vec{v}$'s favorite option on that menu (and they pay non-zero from the other menu). Taking an expectation with respect to $\vec{v}$ yields that $\rev{\realD,M_1} +\rev{\realD,M_2} \geq \rev{\realD,M}$, completing the proof.
\end{proof}

This concludes our simplification of the mechanism. In the subsequent section, we draw a connection between MenuGap and the revenue of oddly- or evenly-price $c$-expensive mechanisms.

\subsection{Connecting Structured Mechanisms to MenuGap}
\label{sec:connect}

We begin with the following definition, which describes our proposed $X,Q$ based on a structured mechanism for $\realD$.

\begin{definition}[Representative Sequences] Let $M$ be a $c$-expensive mechanism which is oddly-priced or evenly-priced, and let $\realD$ be any distribution. An $\varepsilon$-\emph{representative sequence} for $M, \realD$ is the following:
\begin{itemize}
    \item Define $\vec{q}_0(\realD,M):=(0,\ldots, 0)$.
    \item Define offset $a$ to be $1$ if $M$ is oddly-priced, and $0$ if $M$ is evenly-priced.
    \item For all $j \in \mathbb{N}_+$, define $B_j:=\{\vec{v} \in \text{supp}(\realD) : p^M(\vec{v}) \in [c\cdot 2^{2(j-1)+a},c\cdot 2^{2(j-1)+a+1})\}$. 
    \item For all $j \in \mathbb{N}_+$, let $\vec{x}_j\in B_j$ be such that $||\vec{x}_j||_1 \leq ||\vec{v}||_1\cdot (1+\varepsilon)$ for all $\vec{v} \in B_j$.\footnote{Note that for all $\varepsilon>0$, such an $\vec{x}_j$ exists, even if $B_j$ is not closed (as long as $B_j$ is non-empty). In particular, because $M$ is $c$-expensive, we know that $||\vec{v}||_1 \geq c > 0$ for all $\vec{v}$ who pay $\geq c$. If $B_j$ is empty, instead omit $\vec{x}_j,\vec{q}_j$ from both lists (i.e. decrease all future indices by one).}
    \item For all $j \in \mathbb{N}_+$, let $\vec{q}_j:=\vec{q}^M(\vec{x}_j)$.
\end{itemize}
\end{definition}

\begin{proposition}\label{prop:main} Let $M$ be a $c$-expensive, and oddly- or evenly-priced. Let $(X,Q)$ be an $\varepsilon$-representative sequence for $M,\realD$. Then: $$\MenuGap{X,Q}\geq \frac{\rev{\realD,M}}{4(1+\varepsilon)\brev{\realD}}\qquad.$$
\end{proposition}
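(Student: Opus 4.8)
The plan is to lower bound $\MenuGap{X,Q}=\sum_i \gap_i^{X,Q}/\|\vec{x}_i\|_1$ term by term, using incentive compatibility of $M$ to show each $\gap_i^{X,Q}$ captures a constant fraction of the price $p^M(\vec{x}_i)$, and then to sum the resulting per-bucket bounds against a matching per-bucket lower bound on $\brev{\realD}$. Throughout, write $p_j:=p^M(\vec{x}_j)$ and $p_0:=0$, and let $P_i$ denote the left endpoint $c\cdot 2^{2(m(i)-1)+a}$ of the price interval of the bucket $B_{m(i)}$ from which $\vec{x}_i$ was drawn, so that $p_i\in[P_i,2P_i)$; empty buckets are dropped, so the indices $m(i)$ are strictly increasing in $i$.

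First I would establish $\gap_i^{X,Q}\ge P_i/2$. Since $\vec{q}_j=\vec{q}^M(\vec{x}_j)$ is the menu option of $M$ chosen by $\vec{x}_j$ at price $p_j$ (and $\vec{q}_0$ is the all-zero, zero-price option), incentive compatibility (individual rationality for $j=0$) gives $(\vec{q}_i-\vec{q}_j)\cdot\vec{x}_i\ge p_i-p_j$ for every $0\le j<i$, hence $\gap_i^{X,Q}\ge\min_{0\le j<i}(p_i-p_j)$. The point of the lacunary bucketing (consecutive nonempty buckets skip a power of two) is that for every $1\le j<i$ we have $m(j)\le m(i)-1$, so $P_j\le P_i/4$ and therefore $p_j<2P_j\le P_i/2$, while $p_i\ge P_i$; and for $j=0$, $p_i-p_0=p_i\ge P_i$. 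Thus $p_i-p_j>P_i/2$ for all $0\le j<i$, giving $\gap_i^{X,Q}\ge P_i/2$ and so $\MenuGap{X,Q}\ge\sum_i \frac{P_i}{2\|\vec{x}_i\|_1}$.

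Next I would get the per-bucket bundle bound. By the defining property of $\vec{x}_i$, every $\vec{v}\in B_{m(i)}$ has $\|\vec{v}\|_1\ge\|\vec{x}_i\|_1/(1+\varepsilon)$, so posting bundle price $\|\vec{x}_i\|_1/(1+\varepsilon)$ yields $\brev{\realD}\ge\frac{\|\vec{x}_i\|_1}{1+\varepsilon}\Pr_{\vec{v}}[\vec{v}\in B_{m(i)}]$, i.e. $\frac{1}{\|\vec{x}_i\|_1}\ge\frac{\Pr[\vec{v}\in B_{m(i)}]}{(1+\varepsilon)\brev{\realD}}$. Since the nonempty buckets partition $\mathrm{supp}(\realD)$ up to the buyers who purchase nothing (who contribute $0$ to revenue), and prices inside $B_{m(i)}$ are $<2P_i$, we get $\rev{\realD,M}\le\sum_i 2P_i\Pr[\vec{v}\in B_{m(i)}]$. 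Chaining the three estimates:
\[
\MenuGap{X,Q}\;\ge\;\sum_i\frac{P_i}{2\|\vec{x}_i\|_1}\;\ge\;\frac{1}{2(1+\varepsilon)\brev{\realD}}\sum_i P_i\Pr[\vec{v}\in B_{m(i)}]\;\ge\;\frac{\rev{\realD,M}}{4(1+\varepsilon)\brev{\realD}}.
\]

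The substantive content is concentrated in the first estimate: I expect the main obstacle to be stating cleanly why a \emph{single} incentive-compatibility constraint charges $\gap_i^{X,Q}$ with a constant fraction of $p^M(\vec{x}_i)$ itself — rather than merely the difference $p^M(\vec{x}_i)-p^M(\vec{x}_{i-1})$, which could be tiny — and this is exactly what the factor-$4$ separation between the price intervals of consecutive nonempty buckets buys us (together with the within-bucket spread being only a factor $<2$). Everything else is bookkeeping around empty buckets and the two one-line inequalities above (the IC lower bound on $\gap_i^{X,Q}$, and the single bundle price realizing the per-bucket lower bound on $\brev{\realD}$).
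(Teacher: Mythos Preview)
Your proof is correct and follows essentially the same approach as the paper: both use incentive compatibility together with the factor-$4$ separation between consecutive bucket left endpoints to show $\gap_i^{X,Q}$ is at least half the relevant price, then combine with the per-bucket bundle-price lower bound on $\brev{\realD}$. The only cosmetic difference is that the paper phrases the gap lower bound as $\gap_i^{X,Q}\ge p^M(\vec{x}_i)/2$ (using $p^M(\vec{x}_j)<p^M(\vec{x}_i)/2$) rather than your $\gap_i^{X,Q}\ge P_i/2$, and correspondingly bounds $p^M(\vec{v})\le 2p^M(\vec{x}_i)$ instead of $p^M(\vec{v})<2P_i$; the arithmetic and constants are identical.
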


\begin{proof}
The proof will follow immediately from two technical claims. The first claim relates $||\vec{x}_i||_1$ and $\brev{\realD}$.

\begin{claim} 
\label{claim:ell1}
$(1+\varepsilon)\cdot \brev{\realD} \geq \norm{\vec{x}_i}_1 \cdot  \Pr_{\vec{v} \sim \realD}[\vec{v} \in B_i]$.
\end{claim}

\begin{proof}
Recall that $(1+\varepsilon)\cdot ||\vec{v}||_1 \geq ||\vec{x}_i||_1$ for all $\vec{v} \in B_i$. Therefore, if we set a price of $\norm{\vec{x}_i}_1/(1+\varepsilon)$ for the grand bundle, every $\vec{v} \in B_i$ would choose to purchase the grand bundle. This immediately implies the claim, as: \[ \brev{\realD} \geq \frac{||\vec{x}_i||_1}{1+\varepsilon}\cdot \Pr_{\vec{v} \sim \realD}[||\vec{v}||_1 \geq ||\vec{x}_i||_1/(1+\varepsilon)] \geq \frac{||\vec{x}_i||_1}{1+\varepsilon} \cdot \Pr_{\vec{v} \sim \realD}[\vec{v} \in B_i]. \qedhere \]
\end{proof} 

The second claim relates $\rev{\realD,M}$ and $\MenuGap{X,Q}$. Crucially, this claim uses the fact that the mechanism is either oddly-priced or evenly-priced (and therefore $p^M(\vec{x}_i), p^M(\vec{x}_j)$ differ by at least a factor of $2$, for any $i \neq j$).

\begin{claim}\label{claim:menugap} If $X,Q$ is an $\varepsilon$-representative sequence for $M,\realD$, then $\gap_i^{X,Q} \geq p^M(\vec{x}_i)/2$.
 \end{claim}
\begin{proof}
Recall that $\gap_i^{X,Q}:= \min_{j < i}\{\vec{x}_i \cdot (\vec{q}_i - \vec{q}_j)\}$, and that $\vec{q}_i:=\vec{q}^M(\vec{x}_i)$. For any fixed $j < i$, recall that because $M$ was a truthful mechanism, we must have:

\begin{align*}
    &\vec{x}_i \cdot \vec{q}^M(\vec{x}_i) - p^M(\vec{x}_i) \geq \vec{x}_i \cdot \vec{q}^M(\vec{x}_j) - p^M(\vec{x}_j)\\
    \Rightarrow &\vec{x}_i \cdot (\vec{q}_i - \vec{q}_j) \geq p^M(\vec{x}_i)- p^M(\vec{x}_j)\\
    \Rightarrow &\vec{x}_i \cdot (\vec{q}_i - \vec{q}_j) \geq p^M(\vec{x}_i)/2.
\end{align*}
The first line is simply restating incentive compatibility. The second line is basic algebra, and substituting $\vec{q}_i:=\vec{q}^M(\vec{x}_i)$. The third line invokes the fact that $p^M(\vec{x}_i) \geq 2^{2(i-1)+a}$, while $p^M(\vec{x}_j) < 2^{2(j-1)+a+1} \leq 2^{2(i-1)+a-1}$.

\end{proof}

With these two claims, we can wrap up the proof of Proposition~\ref{prop:main}. We can write the following:

\begin{align*}
    \MenuGap{X,Q}&:= \sum_{i=1}^\infty \frac{\gap_i^{X,Q}}{||\vec{x}_i||_1}\\
    &\geq \sum_{i=1}^\infty \frac{p^M(\vec{x}_i)}{2\cdot ||\vec{x}_i||_1}\\
    &\geq \sum_{i=1}^\infty \frac{p^M(\vec{x}_i)\cdot \Pr_{\vec{v} \sim \realD}[\vec{v} \in B_i]}{2 \cdot (1+\varepsilon)\brev{\realD}}\\
    &\geq \sum_{i=1}^\infty \frac{\mathbb{E}_{\vec{v} \sim \realD|\vec{v} \in B_i}[p^M(\vec{v})] \cdot \Pr_{\vec{v} \sim \realD}[\vec{v} \in B_i]}{4 \cdot (1+\varepsilon)\brev{\realD}}\\
    &\geq \frac{\rev{\realD,M}}{4\cdot (1+\varepsilon)\brev{\realD}}
\end{align*}

Above, the first line is simply the definition of MenuGap. The second line uses Claim~\ref{claim:menugap}. The third line uses Claim~\ref{claim:ell1}. The fourth line uses the fact that for all $\vec{v} \in B_i$, $p^M(\vec{v}) \leq 2^{2(i-1)+a+1} = 2 \cdot 2^{2(i-1)+a} \leq 2p^M(\vec{x}_i)$. The final line is just rewriting the definition of $\rev{\realD,M}$, and using the fact that every $\vec{v} \in \text{supp}(\realD)$ with $p(\vec{v}) > 0$ is in some $B_i$.
\end{proof}

\subsection{Wrapping Up}
Proposition~\ref{prop:main}, together with the fact that an oddly- or evenly-priced mechanism is guaranteed to get a good approximation to the optimum, now suffices to prove Theorem~\ref{thm:main}.

\begin{proof}[Proof of Theorem~\ref{thm:main}]
The proof will be a simple consequence of the technical lemmas in this section, once we set $c$ and $\varepsilon$ appropriately. In particular, set $c=\rev{\realD}/100$, and $\varepsilon = 1/100$. Note that $\varepsilon$-representative sequences are guaranteed to exist for any $M, \realD$, as $\varepsilon > 0$.

Start by letting $M'$ denote the $c$-expensive mechanism promised by Claim~\ref{claim:price}. Let then $M''$ denote the oddly- or evenly-priced mechanism promised by Claim~\ref{claim:buckets}. Finally, let $X,Q$ denote the $\varepsilon$-representative sequence for $M'',\realD$. We get:
\begin{align*}
   \MenuGap{X,Q} &\geq^{\text{(Proposition~\ref{prop:main}})} \frac{\rev{\realD,M''}}{{4(1+\varepsilon)\brev{\realD}}}\\
    &\geq^{\text{(Claim~\ref{claim:buckets})}} \frac{\rev{\realD,M'}}{8(1+\varepsilon)\brev{\realD}}\\
    &\geq^{\text{(Claim~\ref{claim:price})}} \frac{\rev{\realD}-c}{8(1+\varepsilon)\brev{\realD}}\\
    &\geq \frac{\rev{\realD}}{9\brev{\realD}}. \qedhere
\end{align*}
\end{proof}


\section{No Converse to Lemma~\ref{lem:align}: Separating MenuGap and AlignGap}
\label{sec:sep}

In this section we prove our second main result: Lemma~\ref{lem:align} does \emph{not} admit a converse, even approximately. We briefly remind the reader that \emph{all} previous constructions witnessing $\rev{\realD}/\brev{\realD} = \infty$ arose by establishing sequences $X$ with $\ScaGap{X}=\infty$ (in fact, even $\SupGap{X} = \infty$). Theorem~\ref{thm:main2} establishes that constructions exist outside of this more restrictive framework. We now proceed with the proof of Theorem~\ref{thm:main2}, beginning with a description of our sequence $X$.

\subsection{Description of our construction}\label{sec:description}
We describe our infinite sequence $X$, which consists of consecutive \emph{layers} of points, and note that this aspect of the construction is similar to that in~\cite{HNv1}. For $\ell = 2$ to $\infty$, layer $\ell$ will have $n_\ell:=\ell \lceil \ln^2(\ell) \rceil +1$ points. These points/vectors will have $\ell_2$ norm equal to one, and will be evenly spaced (in terms of their angle) between $(1,0)$ and $(0,1)$. If $\ell$ is even, they will go counterclockwise from $(1,0)$ to $(0,1)$. If $\ell$ is odd, they will go clockwise from $(0,1)$ to $(1,0)$. Specifically: 
\begin{itemize}
\item Define $n_\ell:=\ell \lceil \ln^2(\ell) \rceil +1$. Define $\theta_\ell:=\frac{\pi}{2(n_\ell-1)}$.
\item Point $\vec{x}_{\ell,j}$ is the $j^{th}$ point in the $\ell^{th}$ layer, and is $(\cos(j\theta_\ell),\sin(j\theta_\ell) )$ when $\ell$ is even, or $(\sin(j\theta_\ell),\cos(j\theta_\ell))$ when $\ell$ is odd.
\item The infinite sequence $X$ is $((\vec{x}_{\ell,j})_{j=0}^{n_\ell-1})_{\ell=2}^{\infty}$. 
\end{itemize}

\noindent Figure~\ref{fig:samplelayer} demonstrates two layers of our construction. In the remainder of this section, we may refer to the sequence of points $\vec{x}_{\ell, j}$ by a single indexed sequence $\vec{x}_i$. The latter is the same as the former where points are ordered lexicographically with respect to the original indexing.

\begin{figure}
\centering
  \includegraphics[width=.6\linewidth]{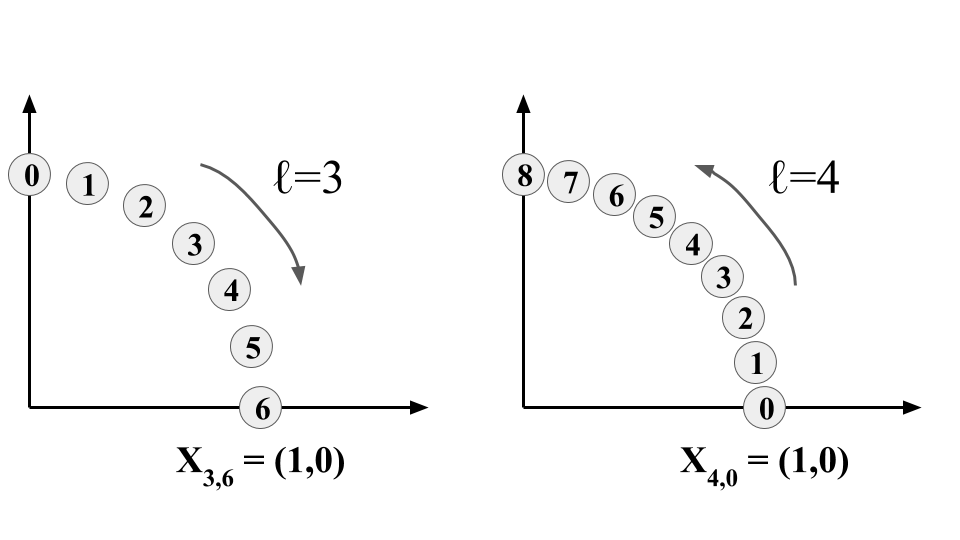}
  \captionof{figure}{An illustration of two layers of our construction. The number of points in each layer increases with $\ell$, but they are always evenly spaced between $(1,0)$ and $(0,1)$. The direction in which the points are placed alternates between clockwise and counterclockwise.}
  \label{fig:samplelayer}
  \end{figure}

\subsection{Upper Bounding $\ScaGap{X}$ via Lagrangian duality}

Now that we have our construction, we first need to upper bound $\ScaGap{X}$ and establish that it's finite. To this end, first observe that for any sequence $X$, $\ScaGap{X}$ is the solution to the following (infinite, if $X$ is infinite) mathematical program, where the variables are $\sgap_i, c_i$ (the sequence $\vec{x}_i$ is fixed, as we're aiming to compute $\ScaGap{X}$):
\begin{align*}
\ScaGap{X} :&\begin{cases}
&\max  \sum_i \max\{0,\sgap_i\}/||\vec{x}_i||_1  \\ 
\text{subject to: } &\forall i, j < i: \sgap_i \leq \vec{{x}}_i \cdot (c_i \vec{{x}}_i - c_j \vec{{x}}_j) \\
&\forall i: 0 \leq c_i \leq 1/||\vec{x}_i||_\infty
\end{cases}
\end{align*}

We next proceed with a series of relaxations of this program. Some steps are specific to our choice of $X$ from Section~\ref{sec:description}, while others hold for arbitrary $X$. Our first step is specific to this construction, and simply bounds $||\vec{x}||_1$. Consider the following mathematical program:

\begin{align*}
\mathrm{AlignGap'}(X) :&\begin{cases}
&\max  \sum_i \max\{0,\sgap_i\}  \\ 
\text{subject to: } &\forall i, j < i: \sgap_i \leq \vec{{x}}_i \cdot (c_i \vec{{x}}_i - c_j \vec{{x}}_j) \\
&\forall i: 0 \leq c_i \leq \sqrt{2}.
\end{cases}
\end{align*}

\begin{observation}\label{obs:prime} For the sequence $X$ defined in Section~\ref{sec:description}, $\mathrm{{AlignGap}}'(X) \geq \ScaGap{X}$.
\end{observation}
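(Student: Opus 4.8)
\textbf{Proof proposal for Observation~\ref{obs:prime}.}

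The plan is to show that the $\mathrm{AlignGap}'(X)$ program is a genuine relaxation of the $\ScaGap{X}$ program for \emph{this particular} $X$: namely, every feasible point of the $\ScaGap{X}$ program is feasible for the $\mathrm{AlignGap}'(X)$ program, and its objective value does not decrease. Since $\ScaGap{X} = \sup_C \ScaGap{X,C}$, this immediately yields $\mathrm{AlignGap}'(X) \geq \ScaGap{X}$.

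The only thing that needs checking is a pair of elementary norm estimates that use the defining feature of the construction in Section~\ref{sec:description}: every point $\vec{x}_{\ell,j}$ lies on the unit circle, i.e. $\|\vec{x}_i\|_2 = 1$ for all $i$. First I would record that for any $\vec{x} \in \mathbb{R}^2_{\geq 0}$ with $\|\vec{x}\|_2 = 1$ we have $\|\vec{x}\|_1 \geq \|\vec{x}\|_2 = 1$ (monotonicity of $\ell_p$ norms) and $\|\vec{x}\|_\infty \geq \|\vec{x}\|_2/\sqrt{2} = 1/\sqrt{2}$ (since in two dimensions $\|\vec{x}\|_2^2 \leq 2\|\vec{x}\|_\infty^2$). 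Consequently $1/\|\vec{x}_i\|_\infty \leq \sqrt{2}$ and $1/\|\vec{x}_i\|_1 \leq 1$ for every $i$.

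Now take any feasible $(c_i, \sgap_i)_i$ for the $\ScaGap{X}$ program, i.e. $c_0 = 0$, $0 \leq c_i \leq 1/\|\vec{x}_i\|_\infty$, and $\sgap_i \leq \vec{x}_i \cdot (c_i \vec{x}_i - c_j \vec{x}_j)$ for all $j < i$. Since $1/\|\vec{x}_i\|_\infty \leq \sqrt{2}$, the same $(c_i, \sgap_i)_i$ satisfies $0 \leq c_i \leq \sqrt{2}$ and the (identical) $\sgap_i$ constraints, so it is feasible for $\mathrm{AlignGap}'(X)$. For the objective, because $1/\|\vec{x}_i\|_1 \leq 1$ and $\max\{0,\sgap_i\} \geq 0$, we have $\max\{0,\sgap_i\} \geq \max\{0,\sgap_i\}/\|\vec{x}_i\|_1$ term by term, hence $\sum_i \max\{0,\sgap_i\} \geq \sum_i \max\{0,\sgap_i\}/\|\vec{x}_i\|_1 = \ScaGap{X,C}$. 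Taking the supremum over $C$ gives $\mathrm{AlignGap}'(X) \geq \ScaGap{X}$.

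There is essentially no obstacle here: the whole content is the observation that on the unit circle the $\ell_1$ and $\ell_\infty$ norms are pinned between $1$ and $\sqrt{2}$ (resp. $1/\sqrt{2}$ and $1$), which is exactly what makes replacing $\|\vec{x}_i\|_1$ by $1$ in the objective and $1/\|\vec{x}_i\|_\infty$ by $\sqrt{2}$ in the constraint a relaxation. The one minor point to be careful about is that $\ScaGap{X}$ is defined as a supremum rather than a maximum, so the argument is phrased at the level of feasible points (or, equivalently, up to an arbitrary $\varepsilon>0$) rather than at an optimizer.
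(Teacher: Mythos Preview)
Your proof is correct and follows the same approach as the paper: both arguments use that $\|\vec{x}_i\|_2=1$ forces $\|\vec{x}_i\|_1\geq 1$ (so the objective only grows) and $\|\vec{x}_i\|_\infty\geq 1/\sqrt{2}$ (so the constraint $c_i\leq\sqrt{2}$ is a relaxation). The paper states this in two sentences, while you make the feasible-point-by-feasible-point relaxation explicit, but the content is identical.
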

\begin{proof}
Every $\vec{x}_i$ in the construction has $||\vec{x}_i||_1 \geq ||\vec{x}_i||_2 = 1$. Therefore, the new objective function is only larger. Moreover, every $\vec{x}_i$ in the construction has $||\vec{x}_i||_\infty \geq 1/\sqrt{2}$ (because the $\ell_2$ norm is $1$), so this is relaxing the upper bound on $c_i$.
\end{proof}

We will proceed to upper bound $\text{AlignGap}'(X)$ via a Lagrangian relaxation of the formulation above. Specifically, consider the following Lagrangian relaxation. We put a Lagrangian multiplier of $1$ on every constraint of the form $\sgap_i \leq \vec{x}_i \cdot (c_i \vec{x}_i - c_{i-1}\vec{x}_{i-1})$, for all $i > 1$. We put a Lagrangian multiplier of $0$ on all other constraints involving $\sgap$. We will \emph{not} put a Lagrangian multiplier on constraints binding $c_i$ to $[0,\sqrt{2}]$, and keep those in the program. This yields the following Lagrangian relaxation (for simplicity of notation below, define $c_0 :=0$, and define $\vec{x}_0 = \vec{0}$):

\begin{align*}
\mathrm{\text{LagRel}_1(X)} :&\begin{cases}
&\max  \sum_i \max\{0,\sgap_i\} + \vec{x}_i \cdot (c_i \vec{x}_i - c_{i-1} \vec{x}_{i-1}) - \sgap_i\\ 
\text{subject to: } 
&\forall i: 0 \leq c_i \leq \sqrt{2}
\end{cases}
\end{align*}

\begin{observation}\label{obs:lagrange} For all $X$, $\mathrm{AlignGap}'(X) \leq \LagRelOne{X}$.
\end{observation}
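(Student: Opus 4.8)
\textit{Proof plan.} This is an instance of standard Lagrangian weak duality: relaxing constraints into the objective with non-negative multipliers can only enlarge the feasible region and (here) not decrease the objective, so the relaxed program's value upper bounds the original one. Concretely, the plan is to fix an arbitrary feasible solution $(\sgap_i,c_i)_i$ of $\mathrm{AlignGap}'(X)$ and show two things: (i) it is also feasible for $\LagRelOne{X}$, and (ii) its objective value does not decrease when passing from $\mathrm{AlignGap}'(X)$ to $\LagRelOne{X}$. Part (i) is immediate: the only constraints retained in $\LagRelOne{X}$ are $0 \le c_i \le \sqrt 2$, and these already hold.

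For part (ii), I would compute the difference between the $\LagRelOne{X}$ objective and the $\mathrm{AlignGap}'(X)$ objective evaluated at this point, which is $\sum_i \bigl(\vec{x}_i \cdot (c_i \vec{x}_i - c_{i-1}\vec{x}_{i-1}) - \sgap_i\bigr)$ — using the stated conventions $c_0:=0$, $\vec{x}_0:=\vec 0$, so the $i=1$ summand reads $c_1\norm{\vec{x}_1}_2^2 - \sgap_1$. The key observation is that the $i$-th summand is exactly the slack of the $\mathrm{AlignGap}'(X)$ constraint $\sgap_i \le \vec{x}_i \cdot (c_i \vec{x}_i - c_j \vec{x}_j)$ specialized to the valid index $j = i-1 < i$; by feasibility this slack is $\ge 0$. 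Hence every summand is non-negative, the per-point comparison already gives that the $\LagRelOne{X}$ value at this point is at least its $\mathrm{AlignGap}'(X)$ value, and (since all these summands are non-negative) this holds whether the sums are finite or $+\infty$.

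Finally I would take the supremum over all feasible points of $\mathrm{AlignGap}'(X)$: each such value is bounded by the corresponding value in $\LagRelOne{X}$, which in turn is at most $\LagRelOne{X}$, giving $\mathrm{AlignGap}'(X) \le \LagRelOne{X}$ as claimed. There is essentially no obstacle here; the only points meriting a word of care are the boundary index $i=1$ (handled by the $c_0=0$, $\vec{x}_0=\vec 0$ convention) and the bookkeeping for possibly-infinite sums when $X$ is infinite, which causes no trouble precisely because the extra terms introduced by the relaxation are non-negative on feasible points.
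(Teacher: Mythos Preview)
Your proposal is correct and matches the paper's proof essentially verbatim: both invoke weak Lagrangian duality by observing that for any feasible $(\sgap_i,c_i)_i$ the added term $\vec{x}_i \cdot (c_i \vec{x}_i - c_{i-1}\vec{x}_{i-1}) - \sgap_i$ is the non-negative slack of the $j=i-1$ constraint, so the relaxed objective only increases. Your extra remarks on the $i=1$ boundary and infinite sums are fine and slightly more careful than the paper's one-paragraph treatment.
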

\begin{proof}
This follows immediately from weak Lagrangian duality. For a quick refresher on weak Lagrangian duality, observe that for any feasible solution to the LP defining $\mathrm{AlignGap}'(X)$ we must have $\vec{x}_i \cdot (c_i \vec{x}_i - c_{i-1}\vec{x}_{i-1}) - \sgap_i \geq 0$. Therefore, for any feasible solution to the original LP, that solution is also feasible for $\mathrm{\text{LagRel}}_1(X)$, and the objective is only larger. Therefore, the optimal solution to $\mathrm{\text{LagRel}}_1(X)$ must be at least as large as $\mathrm{AlignGap}'(X)$.
\end{proof}

We now proceed to further simplify $\LagRelOne{X}$. The next step is defined below:

\begin{align*}
\mathrm{\text{LagRel}_2(X)} :&\begin{cases}
&\max  \sum_i  \vec{x}_i \cdot (c_i \vec{x}_i - c_{i-1} \vec{x}_{i-1})\\ 
\text{subject to: } 
&\forall i: 0 \leq c_i \leq \sqrt{2}
\end{cases}
\end{align*}

\begin{observation}\label{obs:lagrange2} For all $X$, $\LagRelOne{X} = \LagRelTwo{X}$.
\end{observation}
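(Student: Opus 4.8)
The plan is to prove $\LagRelOne{X} = \LagRelTwo{X}$ via the two inequalities, the whole point being that in $\LagRelOne{X}$ the variables $\sgap_i$ no longer appear in \emph{any} constraint (every constraint mentioning them was folded into the objective), so they can be optimized out term by term, leaving a program in the $c_i$'s alone whose feasible region ($0 \le c_i \le \sqrt2$, $c_0 = 0$) is literally identical to that of $\LagRelTwo{X}$; once the $\sgap$-terms are shown to contribute nothing, the two objectives also coincide.

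For the direction $\LagRelOne{X} \ge \LagRelTwo{X}$: given any $(c_i)$ feasible for $\LagRelTwo{X}$, extend it to a feasible point of $\LagRelOne{X}$ by setting $\sgap_i := 0$ for every $i$. Since $\max\{0,0\} - 0 = 0$, the resulting objective value is exactly $\sum_i \vec{x}_i \cdot (c_i \vec{x}_i - c_{i-1}\vec{x}_{i-1})$, i.e.\ the $\LagRelTwo{X}$-objective of $(c_i)$; taking the supremum over $c$ yields the inequality.

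For the reverse direction $\LagRelOne{X} \le \LagRelTwo{X}$: I would argue that the $\sgap$-terms can be driven to zero at the optimum. For each $i$ the contribution $\max\{0,\sgap_i\} - \sgap_i = \max\{0,-\sgap_i\}$ equals $0$ precisely when $\sgap_i \ge 0$; since this chain of relaxations is only ever used to upper bound $\ScaGap{X}$, and the $\ScaGap{X}$ objective reads each $\sgap_i$ only through its positive part while the constraints on $\sgap_i$ are all upper bounds, nothing is lost by imposing $\sgap_i \ge 0$ throughout. Under that restriction the objective of $\LagRelOne{X}$ is exactly $\sum_i \vec{x}_i \cdot (c_i \vec{x}_i - c_{i-1}\vec{x}_{i-1})$, which is at most $\LagRelTwo{X}$ because the same $(c_i)$ is feasible there. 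Combining the two bounds closes the argument.

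The one step requiring genuine care — and the main obstacle — is exactly this treatment of the now-free $\sgap_i$: one must be confident that passing them through the Lagrangian relaxation creates no extra slack that would make $\LagRelOne{X}$ strictly larger than $\LagRelTwo{X}$, which reduces to checking that the optimal $\sgap_i$ in $\LagRelOne{X}$ may be taken nonnegative (equivalently, equal to $0$), so that each $\max\{0,\sgap_i\}-\sgap_i$ vanishes. Everything else is routine bookkeeping, since the $c_i$-parts of the two programs are identical.
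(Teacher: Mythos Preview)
Your direction $\LagRelOne{X}\ge\LagRelTwo{X}$ (set every $\sgap_i:=0$) is fine and matches the paper.

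The reverse direction has a genuine gap. You correctly compute $\max\{0,\sgap_i\}-\sgap_i=\max\{0,-\sgap_i\}\ge 0$ --- in fact you are more careful here than the paper's own proof, which asserts the incorrect inequality $\max\{0,\sgap_i\}-\sgap_i\le 0$. But precisely because this term is nonnegative and the $\sgap_i$ are \emph{unconstrained} in $\LagRelOne{X}$ (every constraint mentioning them was relaxed), sending $\sgap_i\to-\infty$ drives the objective to $+\infty$. So $\LagRelOne{X}$ as literally written is $+\infty$ whenever there is at least one term, and the equality with $\LagRelTwo{X}$ fails as stated.

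Your attempted rescue --- ``nothing is lost by imposing $\sgap_i\ge 0$ throughout'' --- is an argument about $\mathrm{AlignGap}'(X)$, not about $\LagRelOne{X}$. Adding a constraint to a maximization can only decrease the optimum, so restricting $\LagRelOne{X}$ to $\sgap_i\ge 0$ yields a \emph{lower} bound on $\LagRelOne{X}$, the wrong direction for proving $\LagRelOne{X}\le\LagRelTwo{X}$. What your remark \emph{does} correctly justify is a repaired chain: add $\sgap_i\ge 0$ as a hard constraint already in $\mathrm{AlignGap}'(X)$ (harmless, since the objective reads only $\max\{0,\sgap_i\}$), and then form the Lagrangian relaxation while \emph{keeping} $\sgap_i\ge 0$ hard. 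On that feasible region $\max\{0,\sgap_i\}-\sgap_i\equiv 0$, and the two relaxations honestly coincide, preserving Observation~\ref{obs:Lagrange3} and everything downstream. That is the argument both you and the paper are reaching for; neither actually writes it down, and the Observation as literally stated is false.
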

\begin{proof}
Observe that for all $i$, $\max\{0,\sgap_i\} - \sgap_i \leq 0$. When $\sgap_i = 0$, the maximum is achieved (and $\sgap_i:=0$ is feasible). Substituting $\max\{0,\sgap_i\} - \sgap_i = 0$ for all $i$ concludes the proof.
\end{proof}

We make one last observation about the relaxation, which simply rewrites the objective function to group all coefficients of $c_i$. For ease of notation below, define $\vec{x}_{N+1}:=\vec{0}$ (if $N$ is finite. If $N = \infty$, there are no notational issues).

\begin{align*}
\mathrm{\text{LagRel}(X)} :&\begin{cases}
&\max  \sum_i c_i \vec{x}_i \cdot (\vec{x}_i -\vec{x}_{i+1})\\ 
\text{subject to: } 
&\forall i: 0 \leq c_i \leq \sqrt{2}
\end{cases}
\end{align*}

\begin{observation}\label{obs:Lagrange3} For all $X$, $\mathrm{AlignGap}'(X) \leq \LagRelOne{X} = \LagRelTwo{X} = \LagRel{X}$.
\end{observation}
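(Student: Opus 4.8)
The plan is to simply concatenate the three facts already proved in this subsection, since the statement is nothing more than their chaining. The inequality $\mathrm{AlignGap}'(X) \leq \LagRelOne{X}$ is precisely Observation~\ref{obs:lagrange} (weak Lagrangian duality), and the equality $\LagRelOne{X} = \LagRelTwo{X}$ is precisely Observation~\ref{obs:lagrange2}. Hence the only genuinely new content is the final equality $\LagRelTwo{X} = \LagRel{X}$.

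To establish that equality I would first note that $\LagRelTwo{X}$ and $\LagRel{X}$ have the same feasible region: both impose only $0 \leq c_i \leq \sqrt{2}$ for every $i$, with no constraints on any auxiliary variables (there are none left, the $\sgap_i$ having been eliminated in the previous step). So it suffices to check that the two objective functions coincide at every feasible point $(c_i)_i$. This is a rearrangement: expanding a summand of the $\LagRelTwo{X}$ objective gives $\vec{x}_i \cdot (c_i \vec{x}_i - c_{i-1} \vec{x}_{i-1}) = c_i\,||\vec{x}_i||_2^2 - c_{i-1}\,(\vec{x}_i \cdot \vec{x}_{i-1})$, and I would then re-collect the coefficient of each variable $c_i$. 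The variable $c_i$ picks up $||\vec{x}_i||_2^2$ from the $i$-th summand and $-\vec{x}_{i+1}\cdot\vec{x}_i$ from the $(i+1)$-th summand, for a net coefficient of $||\vec{x}_i||_2^2 - \vec{x}_i\cdot\vec{x}_{i+1} = \vec{x}_i \cdot (\vec{x}_i - \vec{x}_{i+1})$, which is exactly the coefficient of $c_i$ in the $\LagRel{X}$ objective. The two stray boundary contributions are absorbed by conventions already in place: the leftover term multiplying $c_0$ vanishes because $c_0 := 0$ and $\vec{x}_0 = \vec{0}$, and when $N$ is finite the leftover $c_N\,||\vec{x}_N||_2^2$ equals $c_N\,\vec{x}_N\cdot(\vec{x}_N - \vec{x}_{N+1})$ by the convention $\vec{x}_{N+1} := \vec{0}$.

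I do not anticipate a real obstacle — the argument is pure bookkeeping. The one place that warrants a word of care is the case $N = \infty$, where the re-collection above is formally a rearrangement of an infinite series; since each $c_i$ appears in only the two consecutive summands indexed $i$ and $i+1$, truncating both objectives at index $N$ yields partial sums differing by exactly the single boundary term $c_N\,\vec{x}_N\cdot\vec{x}_{N+1}$, which is the same type of term the convention $\vec{x}_{N+1}:=\vec{0}$ neutralizes in the finite case, so it is cleanest to read the identity at the level of truncations rather than manipulate the infinite sum symbolically (and in any event the bound $\LagRel{X} \leq 6$ proved next makes all of these quantities finite). Chaining the three pieces then gives $\mathrm{AlignGap}'(X) \leq \LagRelOne{X} = \LagRelTwo{X} = \LagRel{X}$, which combined with Observation~\ref{obs:prime} also delivers the bound $\ScaGap{X} \leq \LagRel{X}$ that the remainder of the section uses.
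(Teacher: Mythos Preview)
Your proposal is correct and follows exactly the approach the paper intends: the paper presents this observation without proof, merely remarking that $\LagRel{X}$ ``simply rewrites the objective function to group all coefficients of $c_i$,'' so your regrouping argument (together with invoking Observations~\ref{obs:lagrange} and~\ref{obs:lagrange2} for the first two relations) is precisely what is meant. If anything, your discussion of the boundary conventions and the $N=\infty$ truncation is more careful than the paper itself.
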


Now, we move to analyze $\LagRel{X}$ for our particular sequence $X$. 

\begin{claim}\label{claim:X} For the sequence $X$ defined in Section~\ref{sec:description}, $\LagRel{X}=\sqrt{2} \cdot \sum_i 1 - \vec{x}_i \cdot \vec{x}_{i+1}$.
\end{claim}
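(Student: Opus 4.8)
The plan is to directly evaluate $\LagRel{X}$ by observing that the objective $\sum_i c_i\,\vec{x}_i\cdot(\vec{x}_i - \vec{x}_{i+1})$ is \emph{separable} in the variables $c_i$, so the supremum decouples into independent one-variable maximizations: for each $i$, we want to pick $c_i \in [0,\sqrt{2}]$ to maximize $c_i \cdot \big(\vec{x}_i\cdot(\vec{x}_i - \vec{x}_{i+1})\big)$. Hence the first step is to show that the coefficient $\vec{x}_i\cdot(\vec{x}_i - \vec{x}_{i+1})$ is nonnegative for every $i$ in our particular sequence $X$, so that the optimal choice is $c_i = \sqrt{2}$ in every coordinate and $\LagRel{X} = \sqrt{2}\sum_i \big(\vec{x}_i\cdot\vec{x}_i - \vec{x}_i\cdot\vec{x}_{i+1}\big) = \sqrt{2}\sum_i\big(1 - \vec{x}_i\cdot\vec{x}_{i+1}\big)$, using $\norm{\vec{x}_i}_2 = 1$.

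The heart of the argument is therefore the claim that $\vec{x}_i\cdot(\vec{x}_i - \vec{x}_{i+1}) \geq 0$, i.e. $\vec{x}_i\cdot\vec{x}_{i+1}\le 1$, for all $i$ (where $\vec{x}_{i+1}$ is the lexicographic successor of $\vec{x}_i$, and $\vec{x}_{N+1}=\vec{0}$ when $N$ is finite — though here $N=\infty$). Since all points have unit $\ell_2$ norm, $\vec{x}_i\cdot\vec{x}_{i+1} = \cos(\text{angle between them}) \le 1$ automatically, with equality only if $\vec{x}_i = \vec{x}_{i+1}$. So in fact the inequality is immediate from Cauchy–Schwarz; the only thing to check is the boundary case where $\vec{x}_i$ is the last point of layer $\ell$ and $\vec{x}_{i+1}$ is the first point of layer $\ell+1$. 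By the construction, consecutive layers alternate direction, so the last point of an even layer is $(0,1)$ and the first point of the next (odd) layer is also $(0,1)$ (since for odd $\ell$, $\vec{x}_{\ell,0} = (\sin 0, \cos 0) = (0,1)$); similarly the last point of an odd layer is $(1,0)$ and the first point of the next even layer is $\vec{x}_{\ell,0}=(\cos 0,\sin 0)=(1,0)$. In these cases $\vec{x}_i\cdot\vec{x}_{i+1}=1$, contributing $0$ to the sum, which is consistent and harmless. Within a layer, consecutive points differ in angle by $\theta_\ell > 0$, so $\vec{x}_i\cdot\vec{x}_{i+1} = \cos\theta_\ell < 1$, strictly positive contribution.

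I do not expect any real obstacle here — the claim is essentially bookkeeping about the construction plus the trivial bound $\cos\le 1$. The one thing to be careful about is the indexing convention: making sure that ``$\vec{x}_{i+1}$'' in the objective of $\LagRel{X}$ refers to the lexicographic successor in the single-indexed sequence (which is exactly the adjacency used when moving from layer to layer), matching the reindexing remark in Section~\ref{sec:description}. Once separability and coefficient-nonnegativity are in hand, the evaluation is a one-line substitution. (The real work — showing this sum is finite, i.e. $\sum_i (1 - \vec{x}_i\cdot\vec{x}_{i+1}) < \infty$ — is presumably the content of a subsequent claim, using $1-\cos\theta_\ell \approx \theta_\ell^2/2$ and $n_\ell\theta_\ell^2 \approx \pi^2/(4 n_\ell) \cdot \ell^{-1}\ln^{-2}\ell$-type estimates summing over $\ell$; but that is outside the scope of this particular claim.)
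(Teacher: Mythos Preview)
Your proposal is correct and follows essentially the same approach as the paper's proof: both observe that $\vec{x}_i\cdot(\vec{x}_i-\vec{x}_{i+1})\ge 0$ because every $\vec{x}_i$ has unit $\ell_2$ norm (Cauchy--Schwarz), hence the separable maximization sets each $c_i=\sqrt{2}$, and then $\vec{x}_i\cdot\vec{x}_i=1$ gives the claimed expression. Your additional case analysis at layer boundaries is correct but unnecessary, since the Cauchy--Schwarz bound $\vec{x}_i\cdot\vec{x}_{i+1}\le 1$ already covers every pair of unit vectors uniformly.
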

\begin{proof}
Observe here that $\vec{x}_i \cdot (\vec{x}_i-\vec{x}_{i+1}) \geq 0$ for all $i$, as each $\vec{x}_i$ has $\ell_2$ norm exactly one. This means that the optimal solution for $\LagRel{X}$ sets each $c_i:=\sqrt{2}$. Finally, recalling that $\vec{x}_i \cdot \vec{x}_i = 1$ for all $i$ concludes the claim.
\end{proof}

\begin{proposition}\label{prop:mainduality} For the sequence $X$ defined in Section~\ref{sec:description}, $\LagRel{X} \leq 6$.
\end{proposition}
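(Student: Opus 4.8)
The plan is to build on Claim~\ref{claim:X}, which already gives $\LagRel{X} = \sqrt{2}\cdot \sum_i (1 - \vec{x}_i \cdot \vec{x}_{i+1})$, so that it suffices to show $\sum_i (1 - \vec{x}_i \cdot \vec{x}_{i+1}) \le 6/\sqrt{2}$. The first and most important step is to identify which consecutive pairs $(\vec{x}_i,\vec{x}_{i+1})$ actually contribute. I claim the \emph{inter-layer} transitions cost nothing: because layer $\ell$ and layer $\ell+1$ are traversed in opposite orientations, layer $\ell$ terminates exactly at the endpoint ($(0,1)$ if $\ell$ is even, $(1,0)$ if $\ell$ is odd) from which layer $\ell+1$ begins — i.e. $\vec{x}_{\ell,n_\ell-1} = \vec{x}_{\ell+1,0}$ as points of $\reals^2$. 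Hence, for the single index $i$ that straddles the boundary between consecutive layers, $\vec{x}_i \cdot \vec{x}_{i+1} = 1$ and the corresponding term is $0$. Consequently $\sum_i (1 - \vec{x}_i\cdot\vec{x}_{i+1}) = \sum_{\ell=2}^\infty \sum_{j=0}^{n_\ell - 2} (1 - \vec{x}_{\ell,j}\cdot\vec{x}_{\ell,j+1})$; only within-layer pairs matter.

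Next I would estimate each within-layer term. For both even and odd $\ell$, any two consecutive points in layer $\ell$ satisfy $\vec{x}_{\ell,j}\cdot\vec{x}_{\ell,j+1} = \cos\theta_\ell$, so $1 - \vec{x}_{\ell,j}\cdot\vec{x}_{\ell,j+1} = 1-\cos\theta_\ell \le \theta_\ell^2/2$, and there are $n_\ell - 1$ such pairs in layer $\ell$. Plugging in $\theta_\ell = \pi/(2(n_\ell-1))$ gives a per-layer contribution of at most $(n_\ell-1)\cdot \frac{\pi^2}{8(n_\ell-1)^2} = \frac{\pi^2}{8(n_\ell-1)}$. Since $n_\ell - 1 = \ell\lceil \ln^2\ell\rceil$, we obtain
\[
\sum_i (1 - \vec{x}_i\cdot\vec{x}_{i+1}) \;\le\; \frac{\pi^2}{8}\sum_{\ell=2}^\infty \frac{1}{\ell\lceil \ln^2\ell\rceil},
\]
reducing everything to bounding one explicit numerical series.

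For that series I would isolate the $\ell=2$ term (which equals $1/2$, as $n_2-1 = 2$) and bound the tail using $\lceil \ln^2\ell\rceil \ge \ln^2\ell$ together with the integral test applied to the decreasing function $x\mapsto 1/(x\ln^2 x)$: $\sum_{\ell\ge 3}\frac{1}{\ell\ln^2\ell} \le \frac{1}{3\ln^2 3} + \int_3^\infty \frac{dx}{x\ln^2 x} = \frac{1}{3\ln^2 3} + \frac{1}{\ln 3} < 1.2$. Thus $\sum_{\ell\ge 2}\frac{1}{\ell\lceil\ln^2\ell\rceil} < 1.7$, and therefore $\LagRel{X} \le \sqrt{2}\cdot\frac{\pi^2}{8}\cdot 1.7 < 6$, with comfortable slack. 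The only genuine subtlety is the inter-layer observation in the first step: without it one might fear an $\Omega(1)$ cost at each of the infinitely many layer boundaries, which would diverge; the alternating-orientation design is exactly what kills those terms. Everything after that is the routine $1-\cos\theta\le\theta^2/2$ bound and a convergent-series estimate, and the loose target constant $6$ means the numerical constants need no careful tuning.
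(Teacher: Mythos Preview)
Your proposal is correct and follows essentially the same approach as the paper: both invoke Claim~\ref{claim:X}, observe that the alternating-orientation design makes the inter-layer terms vanish (so only within-layer pairs contribute), apply $1-\cos\theta_\ell\le\theta_\ell^2/2$ to get the per-layer bound $\pi^2/(8(n_\ell-1))$, and then bound the resulting series $\sum_{\ell\ge 2}\frac{1}{\ell\lceil\ln^2\ell\rceil}$. The only difference is cosmetic: the paper cites the cruder bound $\sum_{\ell\ge 2}\frac{1}{\ell\ln^2\ell}\le 3$, whereas you carry out a tighter integral-test estimate, but both comfortably land below~$6$.
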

\begin{proof}
Let us first observe that if $\vec{x}_i$ is the last point in a layer, then in fact $\vec{x}_{i+1} = \vec{x}_i$, and therefore $1-\vec{x}_i \cdot \vec{x}_{i+1} = 0$. Therefore, these terms do not contribute to the sum. We can then rewrite the term to sum over all layers as follows:
\begin{align*}
\LagRel{X}&=\sqrt{2}\cdot \sum_{\ell}\sum_{j=0}^{n_\ell-2} 1-\vec{x}_{\ell,j}\cdot \vec{x}_{\ell,j+1}\\
&=\sqrt{2} \cdot \sum_{\ell}(n_\ell-1) \cdot (1-\cos(\theta_\ell))\\
&\leq \sqrt{2} \cdot \sum_\ell (n_\ell-1) \cdot \theta_\ell^2/2\\
&= \sqrt{2} \cdot \sum_\ell \frac{\pi^2}{8 (n_\ell-1)}\\
&=\sqrt{2} \cdot \sum_\ell \frac{\pi^2}{8 \ell \lceil \ln^2(\ell) \rceil}\\
&\leq 6.
\end{align*}

Above, the first line follows by the reasoning in the first paragraph. The second line follows by observing that the angle between any two points in layer $\ell$ is exactly $\theta_\ell$ (and the two points in question of $\ell_2$ norm equal to one). The third line follows as $\cos(\theta_\ell) \geq 1-\theta_\ell^2/2$ for any $\theta_\ell \in [0,\pi/2]$ (and all $\theta_\ell$ are indeed in $[0,\pi/2]$). The fourth line follows by substituting the definition of $\theta_\ell$ as a function of $n_\ell$. The fifth line follows by definition of $n_\ell$. The final line is just calculation for this particular infinite series.\footnote{This follows as $\sum_{\ell=2}^\infty \frac{1}{\ell \ln^2(\ell)} \leq 3$.}
\end{proof}

Observation~\ref{obs:prime}, Observation~\ref{obs:Lagrange3}, and Proposition~\ref{prop:mainduality} yield the main result of this section:

\begin{proposition}\label{prop:scagap} For the sequence $X$ defined in Section~\ref{sec:description}, $\ScaGap{X} \leq 6$.
\end{proposition}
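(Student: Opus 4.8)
The statement to be proved, Proposition~\ref{prop:scagap}, is simply the composition of three facts that have already been set up immediately before it in the excerpt, so the ``proof'' is a one-line chaining argument rather than a fresh computation. Concretely, the plan is to invoke Observation~\ref{obs:prime}, which gives $\mathrm{AlignGap}'(X) \geq \ScaGap{X}$ for the specific sequence $X$ from Section~\ref{sec:description} (this used that every $\vec{x}_i$ has unit $\ell_2$ norm, hence $\|\vec{x}_i\|_1 \geq 1$ and $\|\vec{x}_i\|_\infty \geq 1/\sqrt 2$); then invoke Observation~\ref{obs:Lagrange3}, which gives $\mathrm{AlignGap}'(X) \leq \LagRelOne{X} = \LagRelTwo{X} = \LagRel{X}$ for arbitrary $X$ (this is weak Lagrangian duality followed by two cosmetic rewrites of the objective); and finally invoke Proposition~\ref{prop:mainduality}, which evaluates $\LagRel{X} \leq 6$ for this particular $X$ via the $\cos\theta \geq 1-\theta^2/2$ bound and the convergence of $\sum_\ell 1/(\ell \ln^2 \ell)$. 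Putting these together, $\ScaGap{X} \leq \mathrm{AlignGap}'(X) \leq \LagRel{X} \leq 6$, which is exactly the claim.

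I would write the proof as essentially a single displayed chain of inequalities with each link annotated by the result that justifies it, e.g.
\[
\ScaGap{X} \;\leq\; \mathrm{AlignGap}'(X) \;\leq\; \LagRel{X} \;\leq\; 6,
\]
citing Observation~\ref{obs:prime} for the first step, Observation~\ref{obs:Lagrange3} for the second, and Proposition~\ref{prop:mainduality} for the third. No new lemmas or constructions are needed, and no parameters need to be tuned, since $c$ and $\varepsilon$ played no role here (that was the converse direction in Section~\ref{sec:converse}).

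Since the real mathematical content has already been discharged in the lead-up, there is essentially no obstacle remaining in this particular step; the only thing to be careful about is that Observation~\ref{obs:prime} and Proposition~\ref{prop:mainduality} are stated specifically for the sequence $X$ of Section~\ref{sec:description} (whereas Observation~\ref{obs:Lagrange3} holds for all $X$), so the conclusion is correctly stated only for that $X$ — which is exactly how Proposition~\ref{prop:scagap} is phrased. If I were to anticipate where the genuine difficulty of the overall Theorem~\ref{thm:main2} lies, it is \emph{not} here but in the complementary task of showing $\MenuGap{X} = \infty$ for the same $X$: there one must actually \emph{exhibit} a good (non-aligned) sequence $Q$ — presumably one where $\vec{q}_i$ is chosen to point ``across'' the fan of directions in a layer rather than along $\vec{x}_i$ — and show the resulting $\sum_i \gap_i^{X,Q}/\|\vec{x}_i\|_1$ diverges, leveraging precisely the alternating clockwise/counterclockwise layer structure and the growing number of points $n_\ell = \ell\lceil \ln^2\ell\rceil + 1$ per layer. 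But that is the content of a later section, not of Proposition~\ref{prop:scagap}, so for the statement at hand the proof is the short chaining argument above.
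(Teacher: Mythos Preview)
Your proposal is correct and matches the paper's own proof essentially verbatim: the paper states Proposition~\ref{prop:scagap} immediately after Proposition~\ref{prop:mainduality} with the one-line justification that Observation~\ref{obs:prime}, Observation~\ref{obs:Lagrange3}, and Proposition~\ref{prop:mainduality} together yield the result. Your displayed chain $\ScaGap{X} \leq \mathrm{AlignGap}'(X) \leq \LagRel{X} \leq 6$ is exactly the intended argument, and your side remark about which steps are specific to this $X$ versus general is accurate.
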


\subsection{Step 3: Picking a $Q$ to Lower Bound $\MenuGap{X}$.}
Finally, we propose a sequence $Q$ and show that $\MenuGap{X,Q} = \infty$. We describe the sequence again in layers, to match our description of $X$ (that is, the vector $\vec{q}_{\ell,j}$ corresponds to the vector $\vec{x}_{\ell,j}$). In particular, for each even layer $\ell$, the vectors $\vec{q}_{\ell,j}$ will have a fixed $x$-coordinate, and the $y$-coordinate will increase with $j$. For each odd layer, we will introduce no new vectors (i.e. we will just let $\gap_i^{X,Q} = 0$ for all $i$ in an odd layer). Specifically, the construction is as follows:
\begin{itemize}
\item Define $\alpha:= \sum_{\ell=2}^\infty \frac{1}{\ell \ln^2(\ell)}$ (and note that $\alpha < \infty$).
\item Define $z_\ell:= \frac{1}{\alpha} \sum_{j=2}^\ell \frac{1}{j\ln^2(j)}$. 
\item Define $\delta_\ell:= z_\ell-z_{\ell-1} = \frac{1}{\alpha \ell \ln^2(\ell)}$.
\item For $j < n_\ell-1$, define $z_{\ell,j}:=1-\delta_\ell \cot((j+1)\theta_\ell)$. For $j = n_\ell-1$, define $z_{\ell,j}:=1$.\footnote{To see that $z_{\ell,j} \geq 0$ observe that $\cot(x) \leq 1/x$ for all $x \in [0, \pi/2]$. Therefore we get $\delta_\ell \cot((j+1) \theta_\ell) \leq \delta_\ell \frac{1}{(j+1) \theta_\ell} \leq \delta_\ell \frac{2 (n_\ell -1) }{\pi (j+1)} \leq \frac{2}{\pi} \delta_\ell n_\ell = \frac{2 \ell \lceil \log^2 \ell \rceil}{\pi \alpha \ell \log^2 \ell}\leq 1$, for all $\ell, j$ ($\alpha > 1.9$). To see that $z_{\ell, j} \leq 1$ simply note that the term we subtract can't be negative.}
\item For all even $\ell$, and all $j$, set $\vec{q}_{\ell,j}:= (z_\ell, z_{\ell,j})$.
\item For all odd $\ell$, and all $j$, set $\vec{q}_{\ell,j}:=\arg\max_{\ell' < \ell,j'}\{\vec{x}_{\ell, j} \cdot\vec{q}_{\ell',j'}\}$. 
\end{itemize}

\begin{proposition}
\label{prop:vecgapdiverges}
$\MenuGap{X,Q}=\infty$. 
\end{proposition}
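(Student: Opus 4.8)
The plan is to compute $\gap_{\ell,j}^{X,Q}$ for points in even layers (odd layers contribute $0$ by construction) and show that their sum over all of layer $\ell$ is bounded below by a constant independent of $\ell$, so that summing over the infinitely many even layers diverges. First I would analyze a single even layer $\ell$. The key insight behind the choice of $\vec{q}_{\ell,j} = (z_\ell, z_{\ell,j})$ is that $z_\ell$ is constant across the layer (so the $x$-coordinate only jumps between layers) while $z_{\ell,j} = 1 - \delta_\ell \cot((j+1)\theta_\ell)$ is tuned so that consecutive allocation vectors in the layer are, roughly, "incentive-tight" against the corresponding valuation vector $\vec{x}_{\ell,j} = (\cos(j\theta_\ell), \sin(j\theta_\ell))$. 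Concretely, I would show that for $j$ in the interior of layer $\ell$, the minimizing $\vec{q}_j$ in $\gap_{\ell,j}^{X,Q} = \min_{j' < j}(\vec{q}_{\ell,j} - \vec{q}_{j'}) \cdot \vec{x}_{\ell,j}$ is attained at the \emph{previous} point $\vec{q}_{\ell,j-1}$ in the same layer (this requires checking that the earlier even layers' vectors, which all have strictly smaller $x$-coordinate $z_{\ell'} < z_\ell$, do not give a smaller inner product — here the fact that the $x$-coordinates $z_\ell$ form a bounded increasing sequence converging to $1$ and the geometry of the layer alternation should do the work). Then $\gap_{\ell,j}^{X,Q} = (\vec{q}_{\ell,j} - \vec{q}_{\ell,j-1}) \cdot \vec{x}_{\ell,j} = (z_{\ell,j} - z_{\ell,j-1})\sin(j\theta_\ell)$, since the $x$-coordinates cancel.

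Next I would plug in the definition: $z_{\ell,j} - z_{\ell,j-1} = \delta_\ell(\cot(j\theta_\ell) - \cot((j+1)\theta_\ell))$, and using the identity $\cot a - \cot b = \frac{\sin(b-a)}{\sin a \sin b}$ with $b - a = \theta_\ell$, this equals $\delta_\ell \frac{\sin\theta_\ell}{\sin(j\theta_\ell)\sin((j+1)\theta_\ell)}$. Multiplying by $\sin(j\theta_\ell)$ gives $\gap_{\ell,j}^{X,Q} = \delta_\ell \frac{\sin\theta_\ell}{\sin((j+1)\theta_\ell)}$. Dividing by $\|\vec{x}_{\ell,j}\|_1 = \cos(j\theta_\ell) + \sin(j\theta_\ell) \leq \sqrt 2$ and summing over $j$ from $1$ to $n_\ell - 2$ (plus handling the boundary term $j = n_\ell-1$ separately, or just discarding it), I would lower-bound
\[
\sum_{j} \frac{\gap_{\ell,j}^{X,Q}}{\|\vec{x}_{\ell,j}\|_1} \;\geq\; \frac{\delta_\ell \sin\theta_\ell}{\sqrt 2} \sum_{j=1}^{n_\ell - 2} \frac{1}{\sin((j+1)\theta_\ell)}.
\]
Since $\sin((j+1)\theta_\ell) \leq (j+1)\theta_\ell$, the sum $\sum_{j} \frac{1}{\sin((j+1)\theta_\ell)} \geq \frac{1}{\theta_\ell}\sum_{j=2}^{n_\ell-1}\frac1j \geq \frac{1}{\theta_\ell}\,\Omega(\ln n_\ell)$, and $\sin\theta_\ell \geq \theta_\ell \cdot \frac{2}{\pi}$ (concavity of $\sin$ on $[0,\pi/2]$), so the whole layer contributes at least a constant times $\delta_\ell \ln n_\ell$. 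Finally, $\delta_\ell = \frac{1}{\alpha \ell \ln^2\ell}$ and $\ln n_\ell = \ln(\ell\lceil\ln^2\ell\rceil + 1) = \Theta(\ln \ell)$, so each even layer contributes $\Omega\!\left(\frac{1}{\ell \ln \ell}\right)$, and $\sum_{\ell \text{ even}} \frac{1}{\ell \ln \ell} = \infty$. Hence $\MenuGap{X,Q} = \infty$.

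The main obstacle I expect is the first step: rigorously justifying that the minimizer in $\gap_{\ell,j}^{X,Q}$ is the immediately preceding point $\vec{q}_{\ell,j-1}$ rather than some vector from a much earlier layer (or an odd-layer vector, which are themselves defined as argmaxes of earlier inner products and so need to be tracked). This requires a monotonicity/geometry argument comparing $(\vec{q}_{\ell,j} - \vec{q}_{\ell',j'})\cdot\vec{x}_{\ell,j}$ across all $\ell' < \ell$: intuitively, earlier layers have $x$-coordinate $z_{\ell'}$ strictly less than $z_\ell$, which \emph{increases} the value of $(\vec{q}_{\ell,j} - \vec{q}_{\ell',j'})\cdot\vec{x}_{\ell,j}$ through the positive $x$-component $\cos(j\theta_\ell)$ of $\vec{x}_{\ell,j}$, while within layer $\ell$ the $x$-coordinates are equal so only the $y$-coordinate difference (which is smallest for the adjacent point) matters; one must verify the $x$-coordinate gap $z_\ell - z_{\ell'}$ is large enough to dominate any adverse $y$-coordinate contribution. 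I would also need to double-check the $j=0$ and $j = n_\ell - 1$ boundary cases of each layer, where the clean telescoping formula for $z_{\ell,j} - z_{\ell,j-1}$ breaks, but these are finitely many terms per layer and can be bounded crudely (or shown nonnegative) without affecting the divergence.
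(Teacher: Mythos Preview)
Your proposal is correct and follows essentially the same route as the paper: the same formula $\gap_{\ell,j}^{X,Q}=\delta_\ell\,\frac{\sin\theta_\ell}{\sin((j+1)\theta_\ell)}$, the same harmonic-sum lower bound $\sum_j \gap_{\ell,j}^{X,Q}\gtrsim \delta_\ell\ln n_\ell$, and the same divergence via $\sum_{\ell}\frac{1}{\ell\ln\ell}=\infty$. One tactical point where the paper is cleaner than what you sketch for your ``main obstacle'': rather than proving that the minimizer is always $\vec{q}_{\ell,j-1}$, the paper first uses monotonicity of $z_{\ell'}$ and $z_{\ell',j'}$ to reduce the candidate minimizers to just $\vec{q}_{\ell,j-1}$ and $\vec{q}_{\ell-2,n_{\ell-2}-1}$, and then shows directly that \emph{both} yield the same lower bound $\delta_\ell\sin\theta_\ell/\sin((j+1)\theta_\ell)$ (the cross-layer one does because $z_\ell-z_{\ell-2}=\delta_\ell+\delta_{\ell-1}\geq\delta_\ell$, after which the trig identity is identical); this also handles $j=0$ uniformly and avoids needing to determine which candidate actually achieves the minimum.
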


\begin{proof}
To ease notation throughout the proof, we'll use the notation $\gap_{\ell,j}^{X,Q}:=\gap_i^{X,Q}$, where $\vec{x}_i:=\vec{x}_{\ell,j}$ ($\vec{x}_i$ is the $j^{th}$ point on layer $\ell$). We will also use the notation $(\ell',j') < (\ell,j)$ if $\ell' < \ell$, or $\ell' = \ell$ and $j' < j$ (that is, if the $j'^{th}$ point in the $\ell'^{th}$ layer comes before the $j^{th}$ point in the $\ell^{th}$ layer). To understand $\gap_{\ell,j}^{X,Q}$, we need to understand which point ``sets the gap'' for $\vec{x}_{\ell,j}$, that is, which $(\ell',j'):=\arg\min_{(\ell',j')< (\ell,j)}\{(\vec{q}_{\ell,j}-\vec{q}_{\ell',j'})\cdot \vec{x}_i\}$.

We first analyze which point sets the gap for $\vec{{x}}_{\ell,j}$ (for even $\ell$; for odd $\ell$ the gap is zero and we don't care which point sets it), and observe that it must either be $\vec{q}_{\ell,j-1}$ or $\vec{q}_{\ell-2,n_{\ell-2}-1}$ (that is, it must be the previous point in the same layer, or the final point in the previous even layer). 

\begin{claim}\label{claim:gap}
For all $j$, and all even $\ell$, $\gap_{\ell,j}^{X,Q}= \vec{x}_{\ell,j}\cdot \vec{q}_{\ell,j} - \max\{\vec{x}_{\ell,j} \cdot \vec{q}_{\ell-2,n_{\ell-2}-1}, \vec{x}_{\ell,j} \cdot \vec{q}_{\ell,j-1}\}$.\footnote{For simplicity of notation, define $\vec{q}_{0,j} = \vec{0} = \vec{q}_{\ell,-1}$ for all $\ell, j$.}
\end{claim}
\begin{proof}
First, note that $\gap_{\ell,j}^{X,Q}:=\min_{(\ell',j') < (\ell,j)} \{\vec{x}_{\ell, j} \cdot (\vec{q}_{\ell, j} - \vec{q}_{\ell',j'})\} = \vec{x}_{\ell, j} \cdot \vec{q}_{\ell, j} - \max_{(\ell',j')<(\ell,j)}\{\vec{x}_{\ell, j}\cdot \vec{q}_{\ell',j'}\}$. Observe that the first component of $\vec{q}_{\ell',j'}$ 
is monotone increasing in $\ell'$ (for fixed $j'$), and the second component 
is monotone increasing in $j'$ (for fixed $\ell'$). Moreover, the second component of $\vec{q}_{\ell', n_{\ell'}-1}$ is $1$, and this is the maximum possible. Also, both components of $\vec{x}_{\ell,j}$ are non-negative, and therefore we conclude that $\vec{x}_{\ell,j} \cdot \vec{q}_{\ell-2,n_{\ell-2}-1} \geq \vec{x}_{\ell,j} \cdot \vec{q}_{\ell',j'}$ whenever $(\ell',j') \leq (\ell-2,n_{\ell-2}-1)$ (in fact, this extends even to $(\ell',j') \leq (\ell-1,n_{\ell-1}-1)$ as no new $\vec{q}$ are introduced in layer $\ell-1$). Also, $\vec{x}_{\ell,j} \cdot \vec{q}_{\ell,j-1} \geq \vec{x}_{\ell,j} \cdot \vec{q}_{\ell,j'}$ whenever $j' \leq j-1$.
\end{proof}

Now that we know that the gap is set either by the last point in the previous layer, or the previous point in the current layer, we can nail down $\gap_{\ell,j}^{X,Q}$ exactly.

\begin{lemma}\label{lem:gap} For all even $\ell > 2$, and all $j \in [0,n_\ell-1]$:
\[ \gap^{X, Q}_{\ell,j} \geq \delta_\ell \frac{\sin(\theta_\ell)}{\sin((j+1)\theta_\ell)}. \]
\end{lemma}
\begin{proof}
To prove the lemma, we simply compute the inner product of $\vec{x}_{\ell,j}$ with the three relevant vectors $\vec{q}_{\ell,j}, \vec{q}_{\ell-2,n_{\ell-2}-1}, \vec{q}_{\ell,j-1}$. To this end, recall that:
\begin{align*}
\vec{q}_{\ell,j} &= \left(z_\ell, 1 - \delta_\ell \cot((j+1) \theta_\ell)\right), \\ 
\vec{q}_{\ell,j-1}&= \left(z_\ell, 1 - \delta_\ell \cot(j\theta_\ell)\right), \\
\vec{q}_{\ell-2,n_{\ell-2}-1}&= \left(z_{\ell-2}, 1 \right).
\end{align*}

Therefore, observe that  
\begin{align*}
\vec{x}_{\ell,j} \cdot (\vec{q}_{\ell,j} - \vec{q}_{\ell,j-1}) &= \sin(j\theta_\ell)\cdot\delta_\ell\cdot  \left(\cot(j\theta_\ell) - \cot((j+1)\theta_\ell) \right) \\
&= \sin(j \theta_\ell) \cdot \delta_\ell\cdot \left(\frac{\cos(j\theta_\ell)}{\sin(j\theta_\ell)} - \frac{\cos((j+1)\theta_\ell)}{\sin((j+1)\theta_\ell)} \right) \\ 
&= \delta_\ell \cdot  \frac{\cos(j\theta_\ell)\sin( (j+1)\theta_\ell) - \sin(j\theta_\ell)\cos( (j+1)\theta_\ell)}{\sin((j+1)\theta_\ell)}\\
&=\delta_\ell \cdot \frac{\sin(\theta_\ell)}{\sin( (j+1)\theta_\ell)}.
\end{align*}

Similarly, 
\begin{align*}
\vec{x}_{\ell,j} \cdot (\vec{q}_{\ell,j} - \vec{q}_{\ell-2,n_{\ell-2}-1}) &= (\delta_\ell+\delta_{\ell-1}) \cdot \cos(j \theta_\ell) - \delta_\ell\cot( (j+1)\theta_\ell) \cdot \sin(j\theta_\ell) \\
&\geq \delta_\ell \cdot \cos(j \theta_\ell) - \delta_\ell\cot( (j+1)\theta_\ell) \cdot \sin(j\theta_\ell) \\
&=\frac{\delta_\ell}{\sin((j+1)\theta_\ell)}\left(\sin((j+1)\theta_\ell)\cos(j\theta_\ell) - \sin(j \theta_\ell) \cos((j+1)\theta_\ell) \right) \\
&= \delta_\ell \frac{\sin(\theta_\ell)}{\sin((j+1)\theta_\ell)}.
\end{align*}

This means that no matter which point sets the gap (or if one of the points does not exist), the gap is at least $\delta_\ell \frac{\sin(\theta_\ell)}{\sin((j+1)\theta_\ell)}$.
\end{proof}

Finally, now that we know the gap for each point on an even layer, we just need to sum over each even layer.

\begin{corollary}\label{cor:gap} For any even $\ell > 2$, $\sum_{j=0}^{n_\ell-1} \gap_{\ell,j}^{X,Q} \geq \delta_\ell \cdot \ln(n_\ell)/2$.
\end{corollary}

\begin{proof}
Consider the following sequence of calculations:

\begin{align*}
\sum_{j=0}^{n_\ell-1} \gap_{\ell,j}^{X,Q} &\geq \sum_{j=0}^{n_\ell-1} \delta_\ell \frac{\sin(\theta_\ell)}{\sin((j+1)\theta_\ell)}\\
&\geq \delta_\ell \cdot (\theta_\ell-\theta_\ell^3/6)\cdot \sum_{j=0}^{n_\ell-1} \frac{1}{(j+1)\theta_\ell}\\
&\geq \delta_\ell \cdot (1-\theta_\ell^2/6) \cdot \ln(n_\ell)\\
&\geq \delta_\ell \cdot \ln(n_\ell)/2
\end{align*}

Above, the first line follows from Lemma~\ref{lem:gap}. The second line uses the fact that $\theta_\ell - \theta_\ell^3/6 \leq \sin(\theta_\ell) \leq \theta_\ell$, because $\theta_\ell \in [0,\pi/2]$. The third line follows as the $n$-th harmonic sum is at least $\ln(n)$. The final line follows as $\theta_\ell^2/6 = \pi^2/(24 (n_\ell-1)^2) \leq 1/2$.
\end{proof}

And finally, we can wrap up the proof of the proposition. Here, we just need to recall that $\delta_\ell:=\frac{1}{\alpha n_\ell} = \frac{1}{\alpha \ell \ln^2(\ell)}$. Therefore, we conclude that:
\begin{align*}
\sum_{\ell \text{ even}} \sum_{j=0}^{n_\ell-1} \gap_{\ell,j}^{X,Q} &\geq \sum_{\ell \text{ even}} \delta_\ell \cdot \ln(n_\ell)/2\\
&=\sum_{\ell \text{ even}} \frac{1}{2 \alpha \ell \ln(\ell)}\\
&= \infty
\end{align*}

Above, the first line simply restates Corollary~\ref{cor:gap}. The second line substitutes the definition of $\delta_\ell$ and $n_\ell$. The final line follows as the indefinite integral of $\frac{1}{x\ln(x)}dx$ is $\ln\ln (x)$ (and $\alpha$ is an absolute constant).
\end{proof}

This completes the proof of Theorem~\ref{thm:main2}: Proposition~\ref{prop:scagap} establishes that $\ScaGap{X} \leq 6$, while Proposition~\ref{prop:vecgapdiverges} establishes that $\MenuGap{X} = \infty$. 


\section{Conclusion}
\label{sec:conclusion}

We study the nature of distributions $\realD$ with $\rev{\realD}/\brev{\realD} = \infty$. Prior work established a framework to construct such distributions, and therefore established sufficient conditions~\cite{BriestCKW15, hart2019selling}. Our first main result establishes that the most general of these frameworks is in fact complete (Theorem~\ref{thm:main}). Our second main result establishes that the more restrictive framework, through which all previous constructions arose, is not complete (Theorem~\ref{thm:main2}). Finally, we build upon our main construction to develop a novel distribution $\realD$ witnessing $\rev{\realD}/\brev{\realD}=\infty$, but for which none of the ``aligned'' mechanisms of prior work can possibly witness this (Corollary~\ref{cor:main}). 

In terms of future work, it remains open as to whether there is an alternative definition for ``sequences with $\MenuGap{X}=\infty$'' which is easier to parse. Our work establishes that understanding such sequences is necessary and sufficient to understand distributions with $\rev{\realD}/\brev{\realD} = \infty$, and establishes that ``sequences with $\MenuGap{X} = \infty$'' is not equivalent to ``sequences with $\ScaGap{X} = \infty$.'' But it would be exciting for future work to better understand sequences with $\MenuGap{X} = \infty$. 

Interestingly, ongoing work by~\cite{AS22} uses the framework developed in this paper in order to prove \emph{approximation results} for so-called fine-grained buy-many mechanisms, a class of mechanisms which interpolates between buy-one and the recently introduced buy-many mechanisms~\cite{ChawlaTT19, ChawlaTT20}. In general, our techniques make it possible for future work to explore the gap between various simple versus optimal benchmarks without having to reason directly about the underlying mechanisms.

\bibliographystyle{ACM-Reference-Format}
\bibliography{refs}

\newpage

\appendix
\section{$\MenuGap{X}=1$ when $k=1$}\label{app:single}

In this brief section we prove that when $k=1$, for any sequence of $x_i \in \mathbb{R}^+_{\geq 0}$, $\MenuGap{X} = 1$. 

\begin{claim}
When $k=1$, for any $X = \{x_i\}_{i=1}^{N}$, $x_i \in \mathbb{R}^+_{\geq 0}$, $\MenuGap{X} = 1$. 
\end{claim}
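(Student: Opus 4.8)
The plan is to show both $\MenuGap{X} \leq 1$ and $\MenuGap{X} \geq 1$ for any sequence $X = (x_i)_{i=1}^N$ of nonnegative reals. For the lower bound, recall $\MenuGap{X} \geq \SupGap{X} = \MenuGap{X,X}$ by Observation~\ref{obs:trivial}, so it suffices to exhibit one valid $Q$ achieving exactly $1$; in fact I expect the cleanest route is to directly pick a convenient $Q$ (for instance $q_i = 1$ for the index of a largest $x_i$ and $q_j = 0$ otherwise, or the ``staircase'' $Q$ that tracks running maxima) and check $\MenuGap{X,Q} = 1$. Since $k=1$, all the vectors are scalars, so $\gap_i^{X,Q} = \min_{0 \le j < i}(q_i - q_j) x_i$ and $\norm{x_i}_1 = x_i$, hence each term is just $\min_{0 \le j < i}(q_i - q_j)$ — this telescoping structure is what makes the whole thing collapse.

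For the upper bound, the key observation is that with $q_i \in [0,1]$ and $q_0 = 0$, we have $\frac{\gap_i^{X,Q}}{x_i} = \min_{0 \le j < i}(q_i - q_j) \le q_i - \max_{0 \le j < i} q_j$. Writing $m_i := \max_{0 \le j \le i} q_j$ (so $m_0 = 0$), this is at most $m_i - m_{i-1}$ when $q_i \geq m_{i-1}$, and negative otherwise; in either case $\min_{0\le j<i}(q_i - q_j) \le m_i - m_{i-1}$. Summing over $i$ from $1$ to $N$ telescopes: $\sum_{i=1}^N (m_i - m_{i-1}) = m_N - m_0 = m_N \le 1$. So $\MenuGap{X,Q} \le 1$ for every $Q$, giving $\MenuGap{X} \le 1$. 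Combined with the lower bound, $\MenuGap{X} = 1$.

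I would organize the write-up as: (i) state the scalar simplification of $\gap_i^{X,Q}$; (ii) prove the telescoping upper bound via the running maxima $m_i$; (iii) prove the matching lower bound by evaluating $\MenuGap{X,X}$ (or the simple $Q$ above) and noting $\MenuGap{X}\ge \SupGap{X}$. The only mild subtlety — not really an obstacle — is handling the case where $\gap_i^{X,Q}$ is negative (which the paper explicitly allows): the inequality $\min_{0\le j<i}(q_i-q_j)\le m_i-m_{i-1}$ still holds termwise since the left side is $\le q_i - m_{i-1}\le m_i - m_{i-1}$ regardless of sign, so the telescoping bound is unaffected. A second minor point is that one should note $x_i > 0$ is assumed (the notation $\mathbb{R}^+_{\ge 0}$), so division by $\norm{x_i}_1 = x_i$ is fine; if some $x_i = 0$ were allowed the corresponding term would need separate comment, but the claim as stated rules this out.
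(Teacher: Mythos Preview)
Your upper-bound argument via the running maxima $m_i$ is correct and is essentially the paper's approach: the paper first shows that any optimal $Q$ must be monotone non-decreasing and then telescopes $\sum_i(q_i-q_{i-1})=q_N\le 1$, whereas you telescope $\sum_i(m_i-m_{i-1})=m_N\le 1$ directly for arbitrary $Q$ --- the same idea, arguably packaged a bit more cleanly.

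The lower bound needs a small fix. Placing $q_{i^*}=1$ at the index of a \emph{largest} $x_i$ (with $0$'s elsewhere) does not give $\MenuGap{X,Q}=1$ unless $i^*=N$: for each $i>i^*$ the term is $\min_{j<i}(0-q_j)=-1$, so the sum is $1-(N-i^*)$. Likewise, invoking $\SupGap{X}=\MenuGap{X,X}$ does not work: it is only defined when $x_i\in[0,1]$ with $x_0=0$, and even then it need not equal $1$ (e.g.\ $X=(0,\tfrac12,\tfrac13)$ gives $\tfrac12+(\tfrac13-\tfrac12)=\tfrac13$). The paper simply takes $q_N=1$ and $q_i=0$ for all $i<N$, which yields exactly $1$; with that substitution your proof is complete.
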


\begin{proof}
Note that when $k=1$, $||x_i||_1 = x_i$. Therefore, $\MenuGap{X, Q} = \sum_i \min_{j < i} (q_i - q_j)$. We make the following observation which allows us to look at structured optimal solutions. 

\begin{observation}
Any optimal solution $Q$ to $\MenuGap{X}$ is monotone non-decreasing. 
\end{observation} 

\begin{proof}
For the sake of contradiction, suppose we are given an optimal solution $Q$ that is not monotone non-decreasing. Let $i$ be the smallest index for which $q_i < q_{i-1}$. Then $\gap_i^{X,Q} = (q_i - q_{i-1}) x_i < 0$. Consider instead a solution $Q'$ where $q'_j = q_j$ for all $j \neq i$ and $q'_i = q_{i-1}$. Now, $\gap_i^{X,Q'} = 0$. Since $Q_{\leq i-1} = Q'_{\leq i-1}$, $\gap_j^{X,Q} = \gap_j^{X,Q'}$ for all $j < i$. Since $q_{i-1} > q_i$, for any $j >i$ it holds that $(q_j - q_{i-1}) < (q_j - q_i)$. Therefore, $q_i$ is not ``setting the gap'' for any point after it. Hence it also holds that $\gap_j^{X,Q} = \gap_j^{X,Q'}$ for all $j > i$. Putting everything together we get that $\MenuGap{X,Q'} - \MenuGap{X,Q} = \gap_{i}^{X,Q'} - \gap_i^{X,Q} > 0$ contradicting the optimality of $Q$.   
\end{proof}

With this observation in hand, since the $q_i$ are monotone non-decreasing, without loss of generality it holds that $\gap_{i}^{X,Q} = \min_{j < i} q_i - q_{j} = q_i - q_{i-1}$ ($q_{i-1} \geq q_j$ for all $j < i$). Therefore, we get $\MenuGap{X,Q} = \sum_i q_i-q_{i-1} = q_N - q_0$. Since $q_0 = 0$ and $0 \leq q_N \leq 1$, we get that $\MenuGap{X,Q} \leq 1$. 

Finally, note that for any $X$, we can set $q_N = 1$ and $q_i = 0$ for all other $i$, proving that $\MenuGap{X} \geq 1$.

\end{proof}

\section{Omitted Proofs}\label{app:omitted}
\begin{proof}[Proof of Lemma~\ref{lem:align}]
We prove that for all $X,C$, $\ScaGap{X,C} \leq \MenuGap{X}$, which implies the lemma. For a given $X,C$, define:
\begin{itemize}
\item $\vec{q}_i := c_i \cdot \vec{x}_i$, if $\sgap_i^{X,C} > 0$. 
\item $\vec{q}_i:=\arg\max_{j < i}\{c_j \cdot \vec{x}_j\}$, if $\sgap_i^{X,C} \leq 0$. 
\end{itemize}

Observe first that each $\vec{q}_i \in [0,1]^k$, as each $c_i \vec{x}_i \in [0,1]^k$ (this follows because each component of $\vec{x}_i$ is at most $||\vec{x}_i||_\infty$, and each $c_i$ is at most $1/||\vec{x}_i||_\infty$). Next, observe that if $\sgap_i^{X,C} \leq 0$, then $\gap_i^{X,Q} = 0$. This is by definition in bullet two above. Finally, observe that if $\sgap_i^{X,C} > 0$, then $\gap_i^{X,Q} \geq \sgap_i^{X,C}$. This is because the set of $\{\vec{q}_j\}_{j < i}$ is a subset of $\{c_j \vec{x}_j\}_{j < i}$, and because $\vec{q}_i := c_i \cdot \vec{x}_i$ by bullet one. Therefore, $\gap_i^{X,Q} \geq \max\{0,\sgap_i^{X,C}\}$ for all $i$ and the lemma follows. 
\end{proof}

\section{Proof of Corollary~\ref{cor:main}}\label{app:main}
We prove Corollary~\ref{cor:main} by making use of Theorem~\ref{thm:newHN} combined with the sequence $X$ from Section~\ref{sec:description}. The only task is to confirm that $\arev{D} < \infty$ for the resulting $\realD$, which essentially requires that we execute and analyze the construction fully. Let us quickly review the~\cite{hart2019selling} construction, given as input a sequence $X$:
\begin{itemize}
\item Let $B$ be a very large constant, to be defined later.
\item Let $\vec{v}_i:=B^{2^i} \cdot \vec{x}_i/||\vec{x}_i||_1$ (for all $i$).
\item Let $\realD$ sample $\vec{v}_i$ with probability $1/B^{2^i}$ (for all $i$). 
\item Let $\realD$ sample $\vec{0}$ with probability $1-\sum_{i\geq 1}1/B^{2^i}$. 
\end{itemize}

\cite{hart2019selling} establishes that the above construction yields Theorem~\ref{thm:newHN} (for sufficiently large $B$, as a function of $\varepsilon$). To complete the proof of Corollary~\ref{cor:main}, we just need to relate $\arev{D}$ for this construction to $\ScaGap{X}$.

\begin{proposition}\label{prop:hn} The construction above yields a $\realD$ satisfying $\arev{\realD} \leq \ScaGap{X}+1/B$.
\end{proposition}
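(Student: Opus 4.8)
The goal is to bound $\arev{\realD}$ for the Hart--Nisan distribution $\realD$ built from the sequence $X$, by relating any aligned mechanism on $\realD$ to a feasible point of the program defining $\ScaGap{X}$. The plan is as follows. Fix an arbitrary mechanism $M$. Since $\realD$ is supported on $\vec{0}$ together with the points $\vec{v}_i = B^{2^i}\vec{x}_i/\|\vec{x}_i\|_1$, the aligned revenue $\arev{\realD,M}$ only collects payment from those $\vec{v}_i$ whose purchased allocation $\vec{q}^M(\vec{v}_i)$ is parallel to $\vec{v}_i$, equivalently parallel to $\vec{x}_i$. For each such $i$, write $\vec{q}^M(\vec{v}_i) = c_i \vec{x}_i$ for a suitable scalar $c_i \geq 0$; for every other $i$ (including $i$ where the allocation is not parallel, or where we simply don't want to count it), set $c_i := \arg\max_{j<i}\{c_j \vec{x}_j \cdot \vec{x}_i\}$-type values, or more simply $c_i := 0$, so that $C = (c_i)$ is a legitimate sequence for Definition~\ref{def:align}. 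One must check $c_i \leq 1/\|\vec{x}_i\|_\infty$: since $\vec{q}^M(\vec{v}_i) \in [0,1]^k$ and $\vec{q}^M(\vec{v}_i) = c_i \vec{x}_i$, every coordinate $c_i (\vec{x}_i)_t \leq 1$, hence $c_i \leq 1/\|\vec{x}_i\|_\infty$, exactly the feasibility constraint on $C$.

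Next I would bound the payment $p^M(\vec{v}_i)$ from below — wait, we want an \emph{upper} bound on $\arev{}$, so from above — using incentive compatibility. For a parallel-allocation index $i$, IC against the option purchased by $\vec{v}_{i-1}$ (or more carefully, against \emph{every} earlier option, taking the worst) gives $\vec{v}_i \cdot \vec{q}^M(\vec{v}_i) - p^M(\vec{v}_i) \geq \vec{v}_i \cdot \vec{q}^M(\vec{v}_j) - p^M(\vec{v}_j)$ for all $j < i$. Rearranging and summing telescopically across $i$, the total collected payment $\sum_i p^M(\vec{v}_i)/B^{2^i}$ (weighted by the probability $1/B^{2^i}$ that $\realD$ produces $\vec{v}_i$) should be controlled by $\sum_i \vec{v}_i \cdot (\vec{q}^M(\vec{v}_i) - \vec{q}^M(\vec{v}_j))/B^{2^i}$ for the minimizing $j$, plus a geometrically small tail error coming from the $p^M(\vec{v}_j)$ terms at smaller indices (this is where the doubly-exponential gap $B^{2^i}$ versus $B^{2^{i-1}}$ is used, to make cross-terms negligible, contributing the $1/B$ slack in the statement). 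Substituting $\vec{v}_i = B^{2^i}\vec{x}_i/\|\vec{x}_i\|_1$ and $\vec{q}^M(\vec{v}_i) = c_i\vec{x}_i$, $\vec{q}^M(\vec{v}_j) = c_j \vec{x}_j$ (or the dominating earlier aligned option), the probability weight $1/B^{2^i}$ cancels the $B^{2^i}$ scaling, leaving exactly $\sum_i \frac{\vec{x}_i \cdot (c_i\vec{x}_i - c_j\vec{x}_j)}{\|\vec{x}_i\|_1} \leq \sum_i \frac{\max\{0,\sgap_i^{X,C}\}}{\|\vec{x}_i\|_1} = \ScaGap{X,C} \leq \ScaGap{X}$.

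Taking the supremum over $M$ then yields $\arev{\realD} \leq \ScaGap{X} + 1/B$, which is the claim. Combined with Theorem~\ref{thm:newHN} (which gives $\rev{\realD}/\brev{\realD} \geq (1-\varepsilon)\MenuGap{X} = \infty$ for our $X$ by Proposition~\ref{prop:vecgapdiverges}) and Proposition~\ref{prop:scagap} (which gives $\ScaGap{X} \leq 6$), this produces the desired $\realD$ with $\rev{\realD}/\arev{\realD} = \infty$, proving Corollary~\ref{cor:main}.

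\textbf{Main obstacle.} The delicate point is handling the cross-terms $p^M(\vec{v}_j)$ for $j<i$ and, symmetrically, making sure that allocations purchased at a point $\vec{v}_i$ that are \emph{not} parallel to $\vec{x}_i$ do not secretly inflate the bound — i.e., that restricting to the aligned indices when forming $C$ is legitimate and that the telescoping/summation-by-parts argument only ever references the aligned options or the all-zero option. This requires care in choosing, for each $i$, the comparison index $j$ in the IC inequality so that the chain of inequalities closes up while the error terms remain summable and bounded by $1/B$. The scaling separation $B^{2^i}$ is precisely engineered so that $\sum_{j<i} p^M(\vec{v}_j)/B^{2^i}$-type leftovers are dominated by a convergent geometric-type series with sum at most $1/B$; verifying this cleanly (and that $p^M(\vec{v}_j)$ itself is at most roughly $\|\vec{v}_j\|_1 = B^{2^j}$ by individual rationality) is the crux of the argument.
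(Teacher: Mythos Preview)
Your proposal is correct and follows essentially the same route as the paper: define $c_i$ from $\vec{q}^M(\vec{v}_i)$ on the aligned indices (and $c_i=0$ otherwise), invoke IC of $\vec{v}_i$ against each earlier option to get $p^M(\vec{v}_i) \le p^M(\vec{v}_j) + (B^{2^i}/\|\vec{x}_i\|_1)\,\vec{x}_i\cdot(c_i\vec{x}_i - c_j\vec{x}_j)$, pick the $j$ minimizing the second term to produce $\sgap_i^{X,C}$, and bound $p^M(\vec{v}_j)\le \|\vec{v}_j\|_1 \le B^{2^{i-1}}$ so the probability weighting kills the error. Two small clarifications: there is no actual ``telescoping'' or summation by parts---each $p^M(\vec{v}_i)$ is bounded in one shot against a single earlier $j$; and the right order is to first pass from $\vec{q}^M(\vec{v}_j)$ to $c_j\vec{x}_j$ (using $\vec{x}_i\cdot \vec{q}^M(\vec{v}_j)\ge 0 = \vec{x}_i\cdot c_j\vec{x}_j$ when $c_j=0$) and \emph{then} minimize over $j$, rather than the reverse, which is what your ``main obstacle'' paragraph correctly anticipates.
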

\begin{proof}
Consider any mechanism $M$. We show that $\ScaGap{X} \geq \arev{\realD,M}-1/B$. To see this, consider the following choice of $C$:
\begin{itemize}
\item If $\vec{v}_i$ is parallel to $\vec{q}^M(\vec{v}_i)$, set $c_i:=||\vec{q}^M(\vec{v}_i)||_2/||\vec{x}_i||_2$. 
\item If $\vec{v}_i$ is not parallel to $\vec{q}^M(\vec{v}_i)$, set $c_i:=0$. 
\end{itemize}

We now need to lower bound $\sgap_i^{X,C}$, when $i$ satisfies the first bullet. Observe that, because $M$ is truthful, we must have, for all $j < i$:

\begin{align*}
\vec{v}_i \cdot \vec{q}^M(\vec{v}_i) - p^M(\vec{v}_i) &\geq \vec{v}_i \cdot \vec{q}^M(\vec{v}_j) - p^M(\vec{v}_j)\\
\Rightarrow p^M(\vec{v}_i) &\leq p^M(\vec{v}_j) + B^{2^i}\vec{x}_i\cdot (c_i \vec{x}_i - c_j \vec{x}_j)/||\vec{x}||_1\\
\Rightarrow p^M(\vec{v}_i) &\leq 2B^{2^{i-1}}+B^{2^i}\sgap_i^{X,C}/||\vec{x}_i||_1
\end{align*}
Above, the first line follows from incentive compatibility. The second line follows as $\vec{q}^M(\vec{v}_i) = c_i \vec{x}_i$ for all $i$ in the first bullet, and either $\vec{q}^M(\vec{v}_j) = c_j \vec{x}_j$, or $c_j = 0$. The final line follows by taking $j:=\arg\min_{j < i}\{\vec{v}_i\cdot (c_i \vec{v}_i - c_j \vec{v}_j)\}$, and by observing that $\vec{v}_j$ cannot possibly pay more than their value for the grand bundle.

We can then conclude that:

\begin{align*}
\arev{\realD,M} &\leq \sum_i (2B^{2^{i-1}}+B^{2^i} \sgap_i^{X,C}/||\vec{x}_i||_1)/B^{2^i}\\
&\leq \sum_i 2/B^{2^{i-1}} + \ScaGap{X}\\
&\leq \ScaGap{X}+1/B.
\end{align*}

\end{proof}

Because we can take $B$ as large as we like, we can construct a $\realD$ such that $\arev{\realD}$ is arbitrarily close to $\ScaGap{X}$, while also maintaining that $\rev{\realD}$ is arbitrarily close to $\MenuGap{X}$. Because Theorem~\ref{thm:main2} provides a construction $X$ such that $\MenuGap{X}/\ScaGap{X} = \infty$, the~\cite{hart2019selling} construction, with sufficiently large $B$, yields a $\realD$ with $\rev{\realD}/\arev{\realD} = \infty$, completing the proof of Corollary~\ref{cor:main}.

\end{document}